\newtheorem{theorem}{Theorem}
\newfont{\mycrnotice}{ptmr8t at 7pt}
\newfont{\myconfname}{ptmri8t at 7pt}
\begin{document}

\title{
Enabling Correct Interest Forwarding and Retransmissions in a Content Centric Network} 

\numberofauthors{1}
\author{
    \alignauthor
       J. J. Garcia-Luna-Aceves$^{1,2}$  and Maziar Mirzazad-Barijough$^1$\\
       \affaddr{$^1$Computer Engineering Department, University of California, Santa Cruz, CA 95064}\\
              \affaddr{$^2$PARC, Palo Alto, CA 94304}\\
       \email{\{jj,maziar\}@soe.ucsc.edu}
}
 
\maketitle


\begin{abstract}
We show that the mechanisms used in the name data networking (NDN) and the original content centric networking (CCN)  architectures may not detect Interest loops, even if the network in which they operate is  static and no faults occur. Furthermore, we show that no correct Interest forwarding strategy can be defined that allows Interest aggregation and attempts to
detect Interest looping by identifying Interests uniquely.  We introduce SIFAH (Strategy for Interest  Forwarding and Aggregation with Hop-Counts), the first Interest forwarding strategy shown to be correct under any operational conditions of a content centric network. SIFAH operates by  having forwarding information bases (FIBs) store the next hops and number of hops to named content, and by having each Interest state  the name of the requested content and the hop count from the router forwarding an Interest to the content.   We present the results of simulation experiments using the ndnSIM simulator comparing CCN and NDN with SIFAH. The results of these experiments illustrate the negative impact of undetected Interest looping when Interests are aggregated in CCN and NDN, and the performance advantages of using SIFAH.

\end{abstract}


\vspace{-0.08in}
 \category{C.2.6}{Internetworking}{Routers}

\vspace{-0.08in}
\terms{Theory, Design, Performance}

\vspace{-0.08in}
\keywords{Information-centric networks, Interest forwarding strategies} 

\section{Introduction}

A number of  information-centric networking (ICN) architectures have been proposed to improve the performance and the end-user experience of the Internet \cite{icn-survey1,  icn-survey2}. ICN architectures focus on (1) enabling access to content and services by name, rather than by original location; (2) protecting content rather than links or connections; and (3) exploiting in-network storage of content. 

A leading approach in ICN architectures can be characterized as {\em Interest-based content-centric networking} and is the focus of this paper.
Directed Diffusion \cite{diffusion} is one of the first examples of this approach.  Requests for named content (called Interests) are diffused throughout a sensor network, and data matching the Interests are sent back to the issuers of Interests. Subsequent proposals (e.g., 
DIRECT \cite{direct})   use a similar approach in MANETs subject to  connectivity disruption. Nodes  use opportunistic caching of content and  flood Interests persistently.  The limitation of Directed Diffusion and other similar  approaches is the need to flood the network with Interests, an approach that cannot be applied at Internet scale.

The original CCN proposal \cite{ccn} was the first example of an Interest-based content-centric  architecture applicable to wired networks in which Interests do not state the identity of the sender. Today, NDN \cite{ndn} and CCN \cite{ccnx} are the leading proposals for  content-centric networking based on Interest forwarding.
In general,  an Interest-based forwarding strategy consists of: populating forwarding information bases (FIB) of routers with routes to name prefixes denoting content, sending content requests (called Interests) for specific named data objects (NDO) over paths implied by the FIBs, and delivering content along the reverse paths traversed by Interests.  

Section~\ref{sec-prev} summarizes  the operation of the  forwarding strategies of  NDN and CCN. The designers of NDN and CCN have argued \cite{ccn, ndn, ndn-fw, ndn-fw2} that an Interest 
stating a name of requested content and a nonce or unique identifier can be forwarded correctly towards an intended node advertising the content name, that routers can aggregate Interests so that a router can forward an Interest for the same content only once, and that
Interest loops can be detected whenever they occur.  However,  no prior work has been reported proving these claims.

Section~\ref{sec-loop} demonstrates that the forwarding strategies of  the original CCN and NDN  architectures \cite{ccn, ndn-fw, ndn-paper} do not work correctly,  in that  some Interests may never return data objects  to the  consumers who issued  the Interests, even if the content does exist in the network, the network topology and routing are stable, and all transmissions are successful. 
More importantly, it is also shown that there is no correct forwarding strategy with Interest aggregation and   Interest-loop detection  based on the matching of  
Interest-identification data  carried in Interests. In this context, 
Interest-identification data can be names of requested content, nonces, unique identifiers, or the path traversed by an Interest.

Section~\ref{sec-design} introduces  the Strategy for Interest  Forwarding and Aggregation with Hop-counts  (SIFAH), which is the first Interest-based forwarding strategy  shown to be correct.
SIFAH operates by  having FIBs store the next hops {\em and} number of hops to named content, and by forwarding each Interest based on  the name of the requested content and a hop count from the forwarding router to the requested content.
A router accepts to forward an Interest only if the hop count stated in the Interest is larger than the hop count from the router to the content as stated in its FIB. Similarly, a router that has forwarded an Interest for a given NDO accepts to aggregate an Interest it receives while waiting for the requested NDO only if the hop count stated in the Interest is larger than the hop count of the Interest  sent by the router.

Section~\ref{sec-correct} proves that SIFAH works correctly when Interest loops occur and Interests are aggregated. 

Section~\ref{sec-perf} analyzes the storage requirements of SIFAH and NDN and shows that  SIFAH is a more desirable approach than using nonces to attempt to detect Interest loops. Furthermore, it presents simulation results based on the unmodified   implementation of the NDN forwarding strategy and  our implementation of SIFAH in ndnSIM. The simulation results 
help to illustrate  that consumers submitting Interests must receive NDO messages or negative acknowledgments (NACK) when SIFAH is used, while 
some Interests may go unanswered in
NDN and the original CCN design due to undetected Interest loops, even in stable topologies with correct entries in FIBs. Furthermore, the results indicate that Interest loops increase the number of PIT entries and end-to-end delays experienced by consumers even when Interest loops are rare.

\section{Existing Interest Forwarding \\ Strategies }
\label{sec-prev}

In NDN and CCN, a given router $r$ uses three primary data structures to implement any of the forwarding strategies defined for  Interest-based content-centric architectures: a forwarding information base ($FIB^r$),  a pending Interest table ($PIT^r$), and a content store ($CS^r$). 

The forwarding strategy determines the  interaction among $FIB^r$, $PIT^r$, and $CS^r$ needed to forward Interests towards nodes advertising having copies of requested content, send NDOs back to consumers who requested them over reverse paths traversed by Interests, and send any other signal indicating the inability to satisfy an Interest. 

$FIB^r$ is used to route incoming Interests to the appropriate next hops towards the desired content producer advertising a content prefix name $n(j)^*$. 

$FIB^r$ is populated using content routing protocols or static routes and matches Interest names stating a specific NDO $n(j)$  to $FIB^r$ entries of prefix names using \emph{longest prefix match}. 

$PIT^r$ serves as a cache of Interest state, such that content objects that satisfy Interests may follow the reverse Interest path back to the original requester. 
$CS^r$ is a cache for content objects.

In the rest of this paper, we use the term name data object (NDO) or content object interchangeably, and  use the term neighbor instead of interface or face.
We denote the name of NDO $j$ by $n(j)$, and the name prefix that includes that NDO name by $n(j)^*$.
We denote the existence of  an entry for a prefix $n(j)^*$ or NDO with name $n(j)$ in the FIB, PIT or CS of router $i$ by $n(j)^* \in FIB^i$, $n(j) \in PIT^i$, and $n(j) \in CS^i$, respectively.

Two Interest-based forwarding strategies proposed to date  are the original CCN strategy \cite{ccn} and the  NDN forwarding strategy \cite{ndn-fw, ndn-paper}. In both strategies,  an Interest created by source $s$ for NDO $j$ states  $n(j)$ and a nonce $id_j(s)$. The pair  $(n(j), id_j(s) )$ is used to denote an Interest uniquely with a large-enough probability. Furthermore, the same pair is used to detect whether an Interest is traversing a loop. 

In the context of NDN and the original CCN, we use $I[n(j), id_j(s) ]$ to denote  an Interest that requests NDO with name $n(j)$  and that is originated by consumer $s$, who  assigns nonce $id_j(s)$ to the Interest.  A content-object message (or NDO message) sent in response to an Interest $I[n(j), id_j(s) ]$, denoted $D[n(j), id_j(s), sig(j) ]$, states the name and nonce of the Interest, a signature payload $sig(j)$ used to validate the content object, and the object itself. 
 
\begin{algorithm}[h]
\caption{NDN Processing of Interest at router $i$}
\label{algo-ndn-Interest}
{\fontsize{8}{8}\selectfont
\begin{algorithmic}[1]
\STATE{{\bf function} Process Interest}
\STATE {\textbf{INPUT:} $PIT^i$,  $CS^i$, $FIB^i$;}
\STATE{\textbf{INPUT:} $I[n(j), id_j(s) ]$ received from $k$; }
\IF{  $n(j) \in CS^i$ }
	\STATE{send $D[n(j), id_j(s), sig(j) ]$ to $k$ }
\ELSE
	\IF{$n(j) \not\in PIT^i$}
		\STATE{
		create  $PI^i_{n(j)}[ id_j(s), in:k, out: \emptyset ]$; \\  
		{\bf call} Forwarding Strategy($PI^i_{n(j)}$) }
	\ELSE 
		\STATE{\% There is a PIT entry for $n(j)$ }
		\IF{$\exists~ PI^i_{n(j)}[ id_j(x) ]$ with  $id_j(x) = id_j(s)$  }
			\STATE{\% A duplicate Interest is detected \\
		{\bf [NDN]}	send $NI[n(j),  id_j(s),  \mathsf{duplicate}]$  to $k$; \\
			drop  $I[n(j), id_j(s) ]$ }
		\ELSE
		\STATE{\% Interest can be aggregated \\
		create $PI^i_{n(j)}[ id_j(s), in:k, out:\emptyset ]$; }
		\IF{$RT_i(I[n(j), id_j(s) ])$ is exprired}
			\STATE{
			{\bf call} Forwarding Strategy($PI^i_{n(j)}$); \\ 
			 }
		\ENDIF
		\ENDIF

	\ENDIF
\ENDIF
\end{algorithmic}}
\end{algorithm}

 \vspace{-0.2in}
\begin{algorithm}[h]
\caption{NDN forwarding of Interest at router $i$}
\label{algo-ndn-fw}
{\fontsize{8}{8}\selectfont
\begin{algorithmic}[1]
\STATE{{\bf function} Forwarding Strategy}
\STATE {\textbf{INPUT:} $PIT^i$,  $CS^i$, $FIB^i$;}
\STATE{\textbf{INPUT:} $PI^i_{n(j)}[ id_j(s), in:k, out: OUTSET ]$ }

\IF{$n(j)^* \in FIB^i $}
	\FOR{{\bf each} neighbor $m$ in $FIB^i_{n(j)^*}$ {\bf by rank}}
		\IF{$m \not=  in:k$ {\bf for all} $ in:k \in PI^i_{n(j)} \wedge$  \\
		      $~~~m \not\in SET $   {\bf for all} $ out:SET \in PI^i_{n(j)} $}
			\IF{$m$ is available}
			\STATE{
			$OUTSET (PI^i_{n(j)}) = OUTSET (PI^i_{n(j)}) \cup m$; \\
			start $RT_i(I[n(j), id_j(s) ])$; \\
			forward $I[n(j), id_j(s) ]$ to neighbor $m$; \\
			{\bf return}
			}
			\ENDIF
		\ENDIF
	\ENDFOR
		\STATE{ 
		{\bf [NDN]} send $NI[n(j), id_j(s),  \mathsf{congestion}]$  to $k$;\\
	 	drop  $I[n(j), id_j(s) ]; $ 
		delete $PI^i_{n(j)}$
		}
\ELSE
	\STATE{
	send $NI[n(j), $ $ id_j(s),  \mathsf{no~ data}]$  to $k$; \\
	 drop  $I[n(j), id_j(s) ]$; 
	 delete $PI^i_{n(j)}$
	  }
\ENDIF
\end{algorithmic}}
\end{algorithm}

The entry in $FIB^i$ for  name prefix $n(j)^*$ is denoted by $FIB^i_{n(j)^*}$ and consists of $n(j)^*$ and the list of neighbors that can be used to reach the NDO. 
If neighbor $k$ is listed in $FIB^i_{n(j)^*}$, then we state $k \in FIB^i_{n(j)^*}$. In NDN \cite{ndn-fw2}, the FIB entry for an NDO also contains a stale time after which the entry could be deleted; the round-trip time through the neighbor;  a rate limit; and status information stating whether it is known or unknown that the neighbor can bring data back, or is known that the neighbor cannot bring data back. 

The entry in $PIT^i$ for NDO with name $n(j)$ is denoted by $PI^i_{n(j)}$ and consists of a vector of one or multiple tuples, one for each nonce processed for the same NDO name.
The tuple for a given NDO  states the nonce used, the incoming and the outgoing neighbor(s). The tuple created as a result of processing Interest $I[n(j), id_j(s) ]$ received from $k$ and forwarded to a set of neighbors $OUTSET$  is denoted by  
$PI^i_{n(j)}[ id_j(s), in:k, out:OUTSET ]$, and the set of outgoing neighbors 
in $PI^i_{n(j)}$ is denoted by $OUTSET(PI^i_{n(j) )}$.

Each PIT entry $PI^i_{n(j)}[ id_j(s), in:k, out:OUTSET ]$ has a lifetime,
which should be larger than the estimated round-trip time to a site where the requested NDO can be found.

We denote by $NI[n(j), id_j(s), \mathsf{CODE}]$ the NACK sent in response to  $I[n(j), id_j(s) ]$, where $\mathsf{CODE}$ states the reason why the NACK is sent. 

Algorithms~\ref{algo-ndn-Interest}  and \ref{algo-ndn-fw} illustrate the NDN Interest processing  approach \cite{ndn-fw, ndn-fw2} using the notation we have introduced, and  correspond to Interest-processing and forwarding-strategy algorithms in \cite{ndn-fw2}. Algorithm \ref{algo-ndn-fw} does not include the probing of neighbors proposed in NDN, given that this aspect of  NDN is  still being defined \cite{ndn-fw2}. Routers forward NACKs received from those neighbors to whom they sent Interests, unless the PIT entries have expired or do not match the information provided in the NACKs. 
The NDN forwarding strategy augments  the original CCN strategy by  introducing  negative acknowledgements (NACK) sent in 
response to Interests for a number of reasons, including:  routers identifying
congestion, routers not having routes in their FIBs to the requested content, or Interest loops being detected.
Algorithms 1 and 2 indicate the use of NACKs that is not  part of the original CCN design by ``{\bf [NDN]}."

\section{Undetected Interest Loops \\ in CCN an NDN  }
\label{sec-loop}

 The use of nonces in NDN and the original CCN approach can be extrapolated  to include the case in which 
an Interest states a nonce and the path traversed by the Interest by assuming that $ id_j(s)$ equals the tuple $( id_j(s)[nonce], $ $id_j(s)[path] )$.
If a nonce and path traversed by the Interest are used, deciding whether an Interest has not traversed a loop can be based on whether $ id_j(x)[nonce] \not= id_j(s)[nonce] \vee i \not\in id_j(s)[path]$. However, including path information in Interests reveals the identity of originators of Interests.

The key aspect of the  forwarding strategies that have been proposed  for NDN and CCN is that  a router determines whether or not an Interest is  a duplicate Interest  based solely on the content name and Interest-identification data for the Interest (a nonce in NDN's case).  To discuss  the correctness of the  forwarding strategy and other strategies,  we define an Interest loop as follows.

\vspace{0.1in}
{\bf  Interest Loop:}  
An Interest loop of $h$ hops for NDO with name  $n(j)$ occurs  when one or more Interests asking for $n(j)$ are forwarded and aggregated by routers along a cycle $L = $   $\{ v_1 , v_2 , ..., v_h , v_1 \}$
such that router $v_k$ receives an Interest for NDO $n(j)$ from $v_{k-1}$ while waiting for a response to the Interest it has forwarded to $v_{k+1}$ for the same NDO, with $1 \leq k \leq h$, $v_{h+1} = v_1$, and $v_{0} = v_h$. $\square$

\vspace{0.1in}
According to the NDN forwarding strategy, a router can select a  neighbor 
to forward an Interest if it is known that it can bring content 
and its performance is ranked higher than other neighbors that can also bring content. The ranking of  neighbors  is done by a router independently of other routers, which can result in long-term routing loops implied by the FIBs if the routing protocol used in the control plane does not guarantee instantaneous loop freedom (e.g., NLSR \cite{nlsr}).

\vspace{-0.15in}
\begin{figure}[h]
\begin{centering}
    \mbox{
    \subfigure{\scalebox{.22}{\includegraphics{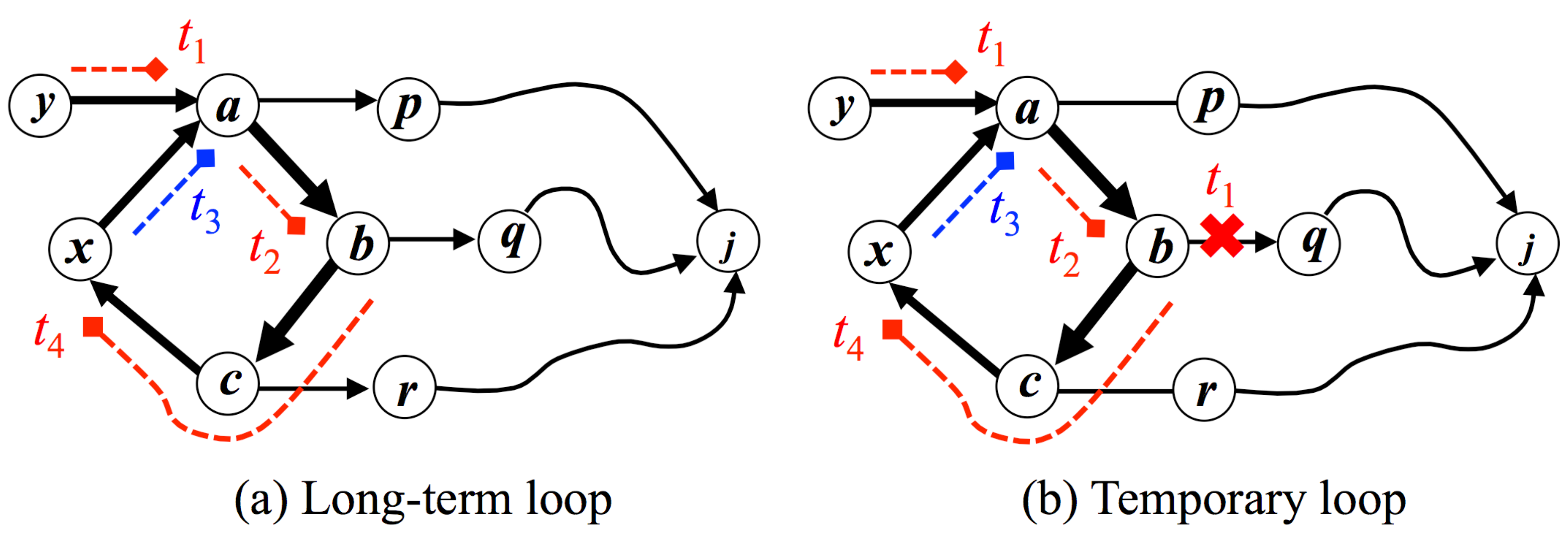}}}
      }
\vspace{-0.15in}
   \caption{Undetected Interest loops in NDN and CCN
   }
   \label{ndn-loop}
\end{centering} 
\end{figure}

Figure \ref{ndn-loop}  illustrates Interest looping in NDN. 
Arrowheads in the figure indicate the next hops to content advertised by router $j$  according to the FIB entries stored in routers.  Thick lines indicate that the perceived performance of a neighbor is 
better than neighbors shown with thinner lines.   Dashed lines indicate the traversal of Interests over links and paths. The time when an event is processed at a router is indicated by $t_i$.
Figure~\ref{ndn-loop}(a) shows the case of a long-term Interest loop formed because the  multi-paths implied in FIBs are not loop-free, even though all routing tables are consistent. 
Figure \ref{ndn-loop}(b) shows the case of a temporary Interest loop when single-path routing is used and FIBs are inconsistent due to a topology change at time $t_1$ (link $(b, q)$ fails). In both cases, router $a$ aggregates  the Interest from $x$ at time $t_3$,  router $x$ aggregates the Interest from $c$ at time $t_4$, and the combined steps preclude the detection of Interest looping. This results in $x$ and $y$ having to wait for their Interests to time out, before they can retransmit. Furthermore, there is no guarantee that their retransmissions will elicit a response (content or NACK).

As Theorem \ref{theo1} proves,  the CCN and NDN forwarding strategies specified in \cite{ccn, ndn-fw2, ndn-paper} 
cannot ensure that Interest loops are detected  when Interests are aggregated, even if nonces were to denote Interests uniquely. The theorem assumes that all messages are sent correctly and that no routing-table changes occur  to show  that the NDN forwarding strategy can fail to return any content or NACK in response to Interests independently of network dynamics.  Furthermore, Theorem \ref{theo2} shows that no forwarding strategy can be correct if it allows Interest aggregation and attempts  Interest-loop detection by the matching of Interest-identification data.

\begin{theorem}
\label{theo1}
Interest loops can go undetected in a stable, error-free network in which  NDN or CCN  is used, even if nonces were to denote Interests uniquely.
\end{theorem}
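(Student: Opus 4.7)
The plan is to establish the theorem by exhibiting an explicit counter-example built on the scenarios already sketched in Figure~\ref{ndn-loop}, and then verifying, line by line against Algorithms~\ref{algo-ndn-Interest} and \ref{algo-ndn-fw}, that the nonce-based checks in NDN and CCN never fire on the looping Interests. Since the claim is a non-existence/failure result about a specific algorithm, a single witness network suffices: I would fix a small topology containing a cycle $L = \{a, b, c, x, a\}$ (or the analogous cycle in Figure~\ref{ndn-loop}(b)) whose FIB entries at every router on $L$ agree with the arrowheads of the figure, so that the path implied by the FIBs for prefix $n(j)^*$ contains $L$. I would then designate two consumers $x$ and $y$ whose Interests for the same NDO $n(j)$ carry distinct nonces $id_j(x)$ and $id_j(y)$, and schedule their submissions so that both Interests enter the cycle from different entry points before either has returned.

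Next I would carry out a step-by-step execution trace, using the time labels $t_1, t_2, \dots$ as in the figure. The critical configuration is the one already described in the text: at time $t_3$ router $a$ receives an Interest from $x$ while its PIT already holds an entry $PI^a_{n(j)}[id_j(y), \ldots]$ produced by forwarding $y$'s Interest; and symmetrically at time $t_4$ router $x$ receives an Interest from $c$ while its PIT already holds an entry $PI^x_{n(j)}[id_j(x), \ldots]$. For each such event I would walk through Algorithm~\ref{algo-ndn-Interest}: the content is not in the $CS$, so the algorithm branches to the PIT lookup; there is a PIT entry for $n(j)$, so it branches to the nonce comparison; and because the nonces $id_j(x)$ and $id_j(y)$ are distinct, the condition $\exists~PI^i_{n(j)}[id_j(x)]$ with $id_j(x)=id_j(s)$ evaluates to false. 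The algorithm therefore treats the arriving Interest as aggregable, appends a new tuple to the PIT, and (since the retransmission timer has not expired) suppresses any further forwarding and produces no NACK.

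From this trace I would draw the conclusion: both Interests remain trapped in the cycle, no router on $L$ ever triggers the \textbf{[NDN]} duplicate branch, no NACK with $\mathsf{CODE} = \mathsf{duplicate}$ is ever emitted, and neither $x$ nor $y$ receives a content object along the reverse path. Thus the loop goes undetected despite the network being static, the FIBs being stable and consistent with a valid (but not instantaneously loop-free) route set, and all transmissions being successful. Since the argument manipulates only the nonce-matching rule that defines CCN/NDN loop detection, it remains valid even under the idealization that nonces are globally unique; the uniqueness hypothesis does not help, because the nonces of the two distinct consumers are legitimately different.

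The main obstacle I anticipate is not algebraic but bookkeeping: I must choose the event schedule carefully so that at every router on $L$ the aggregation event for one consumer's Interest strictly precedes the arrival of the other consumer's looping Interest, and so that no retransmission timer $RT_i$ expires in the interim. This has to be done without appealing to message loss or topology change, otherwise the theorem's hypothesis ``stable, error-free network'' is violated. I would handle this by parameterizing link delays so that the propagation time around $L$ is larger than the delay between the two consumer injections but smaller than the PIT lifetime and the retransmission timeout, and verifying explicitly that the resulting schedule respects the causal order required by Algorithms~\ref{algo-ndn-Interest}--\ref{algo-ndn-fw} at every router on the cycle.
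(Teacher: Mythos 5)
Your proposal is correct and follows essentially the same argument as the paper's proof: two Interests for the same NDO carrying distinct nonces enter a cycle at different points, each is aggregated at a router that already holds a PIT entry under the other nonce (the duplicate check fails precisely because the nonces differ), so neither Interest ever completes the loop back to a router that could match its nonce, and no NACK or content is returned. The paper states this abstractly over a generic loop $\{v_1,\dots,v_h,v_1\}$ with lifetime inequalities such as $|t_2 - t_3| < LT^{v_k}(I[n(j), id_j(y)])$, whereas you instantiate a concrete witness from Figure~\ref{ndn-loop} with explicit link delays; this is a presentational difference only.
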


\begin{proof}
Consider the NDN or CCN forwarding strategy running in a network in which 
no two nonces created by different nodes for the same content are equal, 
all transmissions are received correctly,  and no topology or routing-table changes occur after time $t_0$. 
Let $LT^{v_k}(I[n(j), id_j(s) ])$ denote the lifetime of $I[n(j), id_j(s) ]$ at router $v_k$. 

Assume that  Interests may traverse loops when they are forwarded according to the forwarding strategy, and
let a loop $L = $   $\{ v_1 , v_2 , ..., v_h , v_1 \}$  exist for NDO $j$, and let  Interest $I[n(j), id_j(x) ]$ start traversing the chain of nodes  
$\{v_1 ,$  $v_2 , ...,$ $v_k \}  \in L$  (with $1 < k < h$) at time $t_1 > t_0$.

Assume that $I[n(j), id_j(x) ]$ reaches router $v_k$ at time $t_3 > t_1$ and that router $v_k$ forwards Interest $I[n(j), id_j(y) ]$ to its next hop  $v_{k+1} \in L$ at time $t_2$, where $t_1 \leq t_2 < t_3$,  $id_j(x) \not= id_j(y)$, and $v_{k+1}$ may be $v_1$. 

According to the Interest processing strategy in NDN and CCN, router $v_k$ creates an entry in its PIT for $I[n(j), id_j(y) ]$ at time $t_2$, and perceives any Interest for name $n(j)$ and a  nonce different than $id_j(y)$ received after time $t_2$, and before its PIT entry 
for $I[n(j), id_j(y) ]$ is erased, as a subsequent Interest.

Let $| t_2 -  t_3| < LT^{v_k}(I[n(j), id_j(y) ])$ when router $v_k$ receives $I[n(j), id_j(x) ]$ from router $v_{k - 1} \in L$  at time $t_3$, where $1 < k - 1$. 
According to the  Interest processing strategy in NDN and CCN, router $v_k$  must treat 
$I[n(j), id_j(x) ]$ as  a subsequent Interest for content $n(j)$ that is aggregated, 
because $v_k$ is waiting for $D[n(j), id_j(y) ]$ at time $t_3$.

Because of   the existence of $L$, Interest $I[n(j), id_j(y) ]$ must be forwarded from $v_k$ to $v_1$. Let $t_4$ denote the time when $I[n(j), id_j(y) ]$  reaches $v_1$, where 
$t_4 > t_2 \geq t_1$, and  assume that $| t_1 -  t_4| < LT^{v_1}(I[n(j), id_j(x) ])$. 
According to NDN's Interest processing strategy, $v_1$ must treat $I[n(j), id_j(y) ]$ as a subsequent Interest, because it is waiting for $D[n(j), id_j(x) ]$ at time $t_4$.

Given the Interest aggregation carried out by nodes $v_k$ and $v_1$,  nodes in the chain $\{ v_1 , v_2 , ..., v_{k - 1} \} \in L$ process only 
$I[n(j), id_j(x) ]$,  nodes in the chain $\{ v_{k + 1} , $ $v_{k + 2} , ...,$ $ v_{h} \} \in L$ process only $I[n(j), id_j(y) ]$, and no Interest loop detection can take place. 
Therefore,  no content can be submitted in response to $I[n(j), id_j(x) ]$ and $I[n(j), id_j(y) ]$.
\end{proof}

Similar results to Theorem 1 can be proven for NDN and the original CCN  operating in a network  in which routing tables are inconsistent as a result of network or content dynamics.   In this case, Interest loops can go undetected even if the control plane supports only single-path forwarding of Interests. 

\begin{theorem}
\label{theo2}
No correct forwarding strategy exists with Interest aggregation and  Interest loop detection based on the matching of Interest-identification data.
\end{theorem}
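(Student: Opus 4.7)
The plan is to extend the construction used in Theorem~\ref{theo1} to an arbitrary forwarding strategy $S$ that uses Interest aggregation together with loop detection based on matching of Interest-identification data, where identification data may be any combination of the content name, a nonce, a unique identifier, or the path traversed by the Interest. I would proceed by contradiction, assuming that such an $S$ is correct in the sense that every Interest eventually triggers either a content object or a NACK back to its consumer.

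First, I would reuse the setting of Theorem~\ref{theo1}: a stable, error-free network in which the FIBs imply a loop $L = \{v_1, v_2, \ldots, v_h, v_1\}$ for some NDO $n(j)$. I would then inject two distinct Interests asking for $n(j)$ from two different consumers: $I_1 = I[n(j), id_1]$ entering the loop at $v_1$ at time $t_1$ and $I_2 = I[n(j), id_2]$ entering at $v_k$ at time $t_2$. By tuning $t_1$, $t_2$, and the relevant PIT lifetimes exactly as in the proof of Theorem~\ref{theo1}, I can ensure that $I_1$ arrives at $v_k$ while $v_k$ is still waiting for the response to $I_2$, and symmetrically, that $I_2$ arrives at $v_1$ while $v_1$ is still waiting for the response to $I_1$.

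The central step is to argue that, no matter what notion of identification data $S$ adopts, neither $v_k$ nor $v_1$ can declare a loop at these encounters. The two Interests are independently generated by distinct consumers, so any name-plus-nonce or name-plus-unique-identifier tuples carried by $I_1$ and $I_2$ differ in the identifier component, making the stored and arriving identification data non-matching at $v_k$ and $v_1$. If identification data additionally includes the path traversed, then at the moment $I_1$ reaches $v_k$ its recorded path lists only $v_1, \ldots, v_{k-1}$, which does not contain $v_k$, and by symmetry $I_2$'s path at $v_1$ lists only $v_k, \ldots, v_h$, which does not contain $v_1$. Hence the only identification-based criterion available to $S$ fails to flag a loop, and the aggregation rule forces $v_k$ and $v_1$ to aggregate the incoming Interests rather than forward them further.

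The subtle point that I expect to be the main obstacle is precisely the path-in-identification case: one might hope that a router could notice something from inspecting the incoming path. The key observation I would emphasize is that a node is appended to an Interest's path only at the moment it decides to forward the Interest onward, so aggregation at $v_k$ and $v_1$ pre-empts the very append operation that could later expose the loop to some downstream router. Consequently, the remaining nodes $v_2, \ldots, v_{k-1}$ only ever process $I_1$ and $v_{k+1}, \ldots, v_h$ only ever process $I_2$, so no router along $L$ ever sees a repeated identifier and no content or NACK is generated for either Interest. This contradicts the assumed correctness of $S$, proving the theorem.
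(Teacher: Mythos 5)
Your proposal is correct and follows essentially the same argument as the paper: two distinct Interests entering the loop at $v_1$ and $v_k$, with timing arranged so that each of those routers aggregates the other's Interest, which prevents either Interest from completing the cycle and therefore prevents any identification-data match from ever occurring. Your explicit treatment of the path-carried-in-the-Interest case is a welcome elaboration of what the paper covers only with the phrase ``independently of the amount of information contained in $id_j(x)$ and $id_j(y)$,'' but it does not change the underlying construction.
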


\begin{proof}
Assume any  forwarding strategy in which a router remembers an Interest it has forwarded as long as necessary to detect Interest loops, and detects the occurrence of an Interest loop by matching the Interest-identification data carried in an Interest it receives with the Interest-identification data used in the Interest it forwarded previously asking for the same content. Let 
$I[n(j), id_j(s) ]$ denote the Interest asking for $n(j)$ with Interest-identification data  $id_j(s)$ created by router $s$.

Assume that an Interest  loop $L = $   $\{ v_1 , v_2 , ..., v_h , v_1 \}$ for NDO with name $n(j)$ exists in a network using the  forwarding strategy.
Let Interest $I[n(j), id_j(x) ]$ start traversing the chain of nodes  
$\{v_1 ,$  $v_2 , ...,$ $v_k \}  \in L$  (with $1 < k < h$) at time $t_1$.

Assume that $I[n(j), id_j(x) ]$ reaches router $v_k$ at time $t_3 > t_1$ and that router $v_k$ forwards Interest $I[n(j), id_j(y) ]$ to its next hop  $v_{k+1} \in L$ at time $t_2$, where $t_1 \leq t_2 < t_3$,  $id_j(x) \not= id_j(y)$.
Let $I[n(j), id_j(y) ]$ traverse the chain of nodes $\{v_k ,$  $v_{k+1} , ...,$ $v_1 \}  \in L$, reaching $v_1$ at time $t_4 $, where 
$t_4 > t_2 \geq t_1$.

By assumption, Interest aggregation occurs, and hence $v_k$ aggregates $I[n(j), id_j(x) ]$ at time $t_3$, and $v_1$ aggregates $I[n(j), id_j(y) ]$ at time $t_4$.
Therefore, independently of the amount of information contained in $id_j(x)$ and $id_j(y)$,  $v_1$ cannot receive $I[n(j), id_j(x) ]$ from $v_h$ and 
$v_k$ cannot receive  $I[n(j), id_j(y) ]$ from $v_{k-1}$.
It thus follows that no node in $L$ can successfully 
use the matching of Interest-identification data to detect that 
Interests for $n(j)$ are being sent and aggregated along $L$ and the theorem is true.
\end{proof}

The results in Theorems 1 and 2
can also be proven by mapping the Interest processing strategy of NDN, and any forwarding strategy that attempts to detect Interest loops 
by matching Interest-identification data,  to the problem of distributed termination detection over a cycle, where Interests serve as the tokens of the algorithm \cite{dijkstra, dtd}.
Because Interest aggregation erases a token traversing the ring (Interest loop) when any node in the ring has previously created a different token,  correct termination detection over the ring (i.e., Interest loop detection) cannot be guaranteed in the presence of Interest aggregation.

Obviously,  a loop  traversed by an Interest can be detected easily if each Interest is identified with the route it should traverse.  
This is easy to implement but  requires routers in the network to have  complete topology information
(e.g., \cite{nlsr, icnp98, vutukury}) or at least path information or partial topology information (e.g., \cite{alva, icnp98}).
Similarly, carrying the path traversed by an Interest in its header also ensures that an Interest loop is detected if it occurs.
In these two cases, however, there is no need for using nonces to detect Interest  loops. More importantly, path information reveals the identity of the source router requesting content and hence defeats one of the key objectives of the NDN and CCN forwarding strategies.

Another view of the problem would be to say that Interest aggregation is not common and hence undetected Interest loops should be too rare to cause major performance problems. However, if Interests need not be aggregated, then very different architectures could be designed for content-centric networking that do not require using PITs.

\section{SIFAH
}
\label{sec-design}

\subsection{Design Rationale}

It is clear from the results in the previous section that using nonces or identifying Interests uniquely is useless for Interest-loop detection when Interests are aggregated, and that source routing of Interests  or including the path traversed by an Interest are not desirable. Accordingly, for an Interest  forwarding strategy to be correct in the presence of Interest aggregation, it must be the case that, independently of the identity of an Interest or how Interests for the same content are  aggregated, at least one router detects that it is  traversing a path that  is not  getting the Interest closer to a node that has advertised the requested content. 

Ensuring that at least one router in an Interest loop detects the incorrect forwarding of the Interest  can be attained if Interests were to carry  any type of ordering information that cannot be erased by the use of Interest aggregation.
Fortunately, distance information for advertised name prefixes is exactly this type of ordering information. 

Given that 
forwarding information bases (FIB)  are populated from the routing tables maintained in the control plane of a network, they  constitute  
a readily-available tool to establish the proper interaction between the forwarding strategy operating in the data plane and the distances to advertised content prefixes maintained by the routing protocol  operating in the control plane.  This is the basis of the 
{\em Strategy for Interest  Forwarding and Aggregation with Hop-Counts} (SIFAH).

\subsection{Information Stored and Exchanged}

A router maintains a FIB, a PIT, and an optional  content store.
$FIB^i$ is indexed using content name prefixes. The  FIB entry for  prefix $n(j)^*$ is denoted by $FIB^i_{n(j)^*}$, and consists of a list of one or more  tuples. Each tuple 
states a next hop to $n(j)^*$ and a hop count to the prefix.
The set of next hops to $n(j)^*$ listed in $FIB^i_{n(j)^*}$ is denoted by
$S^i_{n(j)^*}$.
The hop count to $n(j)^*$ through neighbor $q \in S^i_{n(j)^*}$ is denoted by  $h(i, n(j)^*, q)$. 

An Interest sent by node $k$ requesting NDO $n(j)$ is denoted by  $I[n(j), h^I (k) ]$, and states the name $n(j)$, and the hop count ($h^I (k)$) from node  $k$ to the name prefix $n(j)^*$ that is the best match for NDO name $n(j)$  when $k$ forwards the Interest. 

An NDO message sent in response to the Interest $I[n(j),$ $h^I (k) ]$ is denoted by  $D[n(j), sig(j) ]$, and  states the name  of the Interest,  a  signature payload $sig(j)$ used to validate the content object, and the object  itself. 

The NACK sent by router $i$ in response to  an Interest 
is denoted by $NI[n(j),  \mathsf{CODE}]$  where $\mathsf{CODE}$ states the reason why the NACK is sent.  Possible reasons for sending a NACK include: (a) an Interest loop is detected, (b) a  route failed towards the requested content, (c) no content is found, and (d) the PIT entry expired.

$PIT^i$ is indexed using NDO names.
$PI^i_{n(j)}$  denotes the entry created in $PIT^i$ for NDO with name $n(j)$, and specifies: the name of the NDO; 
the hop count $h^I (i)$ assumed by router $i$ when it forwards Interest $I[n(j), h^I (i) ]$; the set of incoming neighbors from which  Interests for $n(j)$ are received ($INSET(PI^i_{n(j)})$); the set of outgoing neighbor(s) ($OUTSET(PI^i_{n(j)})$) to whom router $i$ forwards its Interest;  
and the remaining lifetime for the Interest ($RT(PI^i_{n(j)})$).

\subsection{Interest Loop Detection}

To define  a correct forwarding strategy, special attention must be paid to the fact that updates made to the FIBs stored at routers occur independently of and concurrently with the updates made to their PITs. For example, once a router has forwarded an Interest that assumed a given  distance to content prefix $n(i)^*$ and waits for its Interest to return a data object,
its distance to the same content may change based on updated to its FIB. Hence, simply comparing the minimum distance from a router to content against a distance to content stated in an Interest is not enough to ensure that Interests are not incorrectly forwarded to routers that are farther away form the requested content.

SIFAH takes into account the fact that FIBs and PITs are updated independently  by requiring that a router that forwards an Interest for a given piece of content remembers in its PIT entry the value of the distance to content  assumed 
when it issues its Interest.  The  following rule is then used for a given router to determine whether  an Interest may be propagating over an Interest loop. 

The number of hops to requested content is used as the  
metric  for the invariant condition. This is done for two reasons, 
storing hop-count distances in the FIB incurs less storage overhead than storing complex distance values, and the next hops to a prefix stored in the FIB can be ranked based on the actual distances to content.

\vspace{0.1in}
{\bf HFAR--Hop-Count  Forwarding with Aggregation Rule:}  Router $i$ can 
accept $I[n(j), h^I(k)]$ from router $k$ if one of the following two conditions 
is satisfied:
\begin{enumerate}
\item
$n(j) \not\in PIT^i \wedge \exists ~v  (~ v \in  S^i_{n(j)^*}  \wedge h^I(k)  > h(i, n(j)^*, v) ~)$ 
\item
$n(j) \in PIT^i \wedge h^I(k)  > h^I(i)$
\end{enumerate}

The first condition ensures that router $i$ accepts an Interest from neighbor $k$ only if $i$ determines that is closer to $n(j)^*$ 
through at least one neighbor than $k$ was when it sent its Interest.
The second condition ensures that router $i$ accepts an Interest from neighbor $k$ only if $i$ was closer to  $n(j)^*$  than $k$ when $i$ and $k$ sent their Interests. 

Section~\ref{sec-correct} proves that using HFAR is {\em sufficient} to ensure that an Interest loop  cannot occur without a router in the loop detecting that the Interest has been forwarded incorrectly. This result is independent of whether Interests are aggregated or sent over one or multiple paths, or  how Interests are retransmitted.

Similar  forwarding rules based on more sophisticated lexicographic orderings
could be defined based on the same general approach  stated in HFAR.
The requirement for such forwarding rules is that more information needs to be maintained in the FIBs, such as distance values to name prefixes that take into account such factors as end-to-end delay, reliability, cost, or bandwidth available. 

HFAR is very similar to sufficient conditions for loop-free routing introduced in the past, in particular sufficient conditions for loop-free routing based on diffusing computations \cite{dual, vutukury, dual-patent}. Indeed,  the approach we introduce  for Interest-loop detection in SIFAH
can be viewed as a case of  termination detection based on  diffusing computations \cite{diffuse}.

It should be pointed out   that, because HFAR is not {\em necessary} to detect loops, there are cases in which HFAR  is not satisfied even though no Interest loops exist.  However, prior results on  multi-path routing based on 
diffusing computations \cite{dual2} indicate that this does not constitute a performance problem.
Given that FIBs are updated to reflect correct hop counts, or correct complex distance values in general,
a sufficient condition for loop detection operating with multi-path routing is a good 
baseline for an Interest-based forwarding strategy.

\subsection{SIFAH Operation}

Algorithms~\ref{algo-SIFAH-Interest}  to \ref{algo-SIFAH-link} specify 
the steps taken by  routers to process Interests, forward Interests, return NDOs, process perceived link failures, handle Interest-lifetime expirations, and 
send NACKs according to SIFAH.  Optional steps  and data in algorithms are indicated by ``{\bf [o]}".

The algorithms used to describe SIFAH were not designed to  take into account such issues  as load balancing of available paths, congestion-control, or the forwarding of an Interest over multiple concurrent paths. 
For simplicity,  it is assumed that all Interest retransmissions are carried out on an end-to-end basis (i.e., by the consumers of content) rather than routers. Hence, routers do not attempt to provide any ``local repair" when a neighbor fails or a NACK to an Interest is received; 
the origin of an Interest is in charge of retransmitting it after receiving a NACK for any reason.
Interest retransmissions could also be done by routers.  
The design and analysis of Interest retransmission strategies implemented by routers or by content consumers  is a topic deserving further study. 

{\bf Algorithm~\ref{algo-SIFAH-Interest} } implements HFAR. Router $i$ determines that an Interest can be forwarded because  Condition 1 in HFAR is satisfied (Line 9 of Algorithm~\ref{algo-SIFAH-Interest}), or an Interest can be aggregated because Condition 2 of HFAR is  satisfied (Line 17 of Algorithm~\ref{algo-SIFAH-Interest}). 
Content requests from local content consumers are sent to the router in the form of Interests stating infinite hop counts to content, and  each router knows which neighbors are remote  
and which are local.

\begin{algorithm}[h]
\caption{SIFAH Processing of Interest at router $i$}
\label{algo-SIFAH-Interest}
{\fontsize{8}{8}\selectfont
\begin{algorithmic}[1]
\STATE{{\bf function} Process Interest}
\STATE {\textbf{INPUT:} $PIT^i$,  $CS^i$, $FIB^i$, 
$I[n(j), h^I(k) ]$;}

\STATE{{\bf if}  $n(j) \in CS^i$ {\bf then} send $D[n(j), sig(j) ]$ to $k$ }

\IF{  $n(j) \not\in CS^i$ }
	\IF{$n(j) \not\in PIT^i$}
		\IF{$n(j)^* \not\in FIB^i$}
			\STATE{\% Route failed for $n(j)^*$: \\
			send $NI[n(j),  \mathsf{no~ route}  ]$ to $k$; 
	 		drop $I[n(j), h^I(k) ]$}
		\ELSE 
			\IF{$ \exists ~v \in  S^i_{n(j)^*} (~ h^I(k)  > h(i, n(j)^*, v) ~)$}
             	  \STATE{\% Interest can be forwarded: \\
				{\bf call}  Forwarding Strategy($PI^i_{n(j)}$)
				}				
          		\ELSE
				\STATE{
				\% Interest may be traversing a loop: \\
				send $NI[n(j),  \mathsf{loop} ]$ to $k$; 
          			~~drop $I[n(j), h^I(k) ]$
          			}
			\ENDIF
		\ENDIF
	\ELSE 
		\STATE{\% There is a PIT entry for $n(j)$: }
		\IF{ $h^I(k)  > h^I(i)$ }
          		\STATE{
			\% Interest can be aggregated:\\
			$INSET(PI^i_{n(j)}) = INSET(PI^i_{n(j)}) \cup k$
			}	
		\ELSE
			\STATE{
			\% Interest may be traversing a loop: \\
			send $NI[n(j),  \mathsf{loop} ]$ to $k$; 
          		~drop $I[n(j), h^I(k) ]$
          		}	
		\ENDIF
	\ENDIF
\ENDIF
\STATE{{\bf end function} }
\end{algorithmic}}
\end{algorithm}

The Maximum Interest Life-time ($MIL$) assumed by a router  before it deletes an Interest from its PIT 
should be  large enough to preclude an excessive number of retransmissions.
On the other hand,  $MIL$ should not be too large to cause the PITs to store too many Interests for which no NDO messages or NACKs will be sent due to failures or transmission errors. 
A few seconds would be a viable value for $MIL$.
In practice, however, the consumer submitting an Interest to its local router could provide an initial value for the Interest lifetime estimated over a number of Interests submitted for NDOs in the same NDO group corresponding to a large piece of content (e.g., a movie).  This is specially the case given our assumption that Interest retransmissions are carried out by content consumers, rather than by routers.

{\bf Algorithm~\ref{algo-SIFAH-fw} } describes a simple forwarding strategy 
in which router $i$ simply selects the 
first  neighbor $v$
in the ranked list of neighbors stored in the FIB for prefix $n(j)^*$
that satisfies the first condition in HFAR (Line 4 of the algorithm).
More sophisticated strategies can be devised that attain load balancing among multiple available  routes towards content and can be close to optimum (e.g., \cite{vutukury}). In addition, the same Interest could be forwarded over multiple paths concurrently, in which case  content could be sent back  over some or all the paths that the Interest traversed successfully.
To be effective, however, these  approaches should  require the adoption of a loop-free  multi-path routing protocol in the control plane (e.g., \cite{dcr, dcr-mcast}). In this context, the control plane establishes valid multi-paths to content prefixes using  long-term performance measures, and the data plane exploits those paths using HFAR and short-term performance measurements, without risking the long delays associated with backtracking due to looping. 

\begin{algorithm}[h]
\caption{SIFAH  Interest forwarding at router $i$}
\label{algo-SIFAH-fw}
{\fontsize{8}{8}\selectfont
\begin{algorithmic}[1]
\STATE{{\bf function} Forwarding Strategy}
\STATE {\textbf{INPUT:} $PIT^i$,  
$FIB^i$, $MIL$, $I[n(j), h^I(k) ]$;}

		\FOR{{\bf each} $v \in S^i_{n(j)^*}$ {\bf by rank}} 
			\IF{
			$h^I(k) > h(i, n(j)^*, v) $}
				\STATE{
				create $PI^i_{n(j)}$; \\
				$INSET(PI^i_{n(j)}) = \{k\}$; 
				$OUTSET(PI^i_{n(j)}) = \{v \}$; \\
				$RT(PI^i_{n(j)}) = MIL$; 
				$h^I(i) = h(i, n(j)^*, v) $; \\
				 forward $I[n(j), h^I(i) ]$ to  $v$;  
				 {\bf return}
				}
			\ENDIF
		\ENDFOR
		\STATE{\% No neighbor can be used in $ S^i_{n(j)^*}$: \\
		{\bf for each} $k \in INSET(PI^i_{n(j)})$ send $NI[n(j),   \mathsf{no~ route}  ]$ to $k$}

\STATE{{\bf end function} }
\end{algorithmic}}
\end{algorithm}

{\bf Algorithm~\ref{algo-SIFAH-Data}} outlines the processing of NDO messages received in response to Interests.  A router accepts an NDO received from a neighbor if it has a PIT entry waiting for the content and the NDO message came from one of the neighbors over which the Interest was sent (Line 5 of the algorithm). 
The router   forwards the valid NDO to any neighbor that requested it and deletes the corresponding PIT entry. 
A router stores an NDO it receives optionally (Step 7 of Algorithm~\ref{algo-SIFAH-Data}).
The caching  of NDOs is done according to the   caching strategy used in the network, which can be path-based or edge-based \cite{caching}, for example. However, SIFAH works independently of the caching strategy adopted in the network.

\begin{algorithm}[h]
\caption{Process NDO message from $q$ at router $i$}
\label{algo-SIFAH-Data}
{\fontsize{8}{8}\selectfont
\begin{algorithmic}[1]
\STATE{{\bf function} Process NDO message}
\STATE{\textbf{INPUT:} $PIT^i$,  $CS^i$,  $FIB^i$, $D[n(j), sig(j) ]$ received from  $q$; }
\STATE{{\bf [o]} verify $ sig(j)$;}
\STATE{{\bf [o]}  {\bf if} verification fails {\bf then} drop $D[n(j), sig(j) ]$}
\IF{$n(j) \in PIT^i \wedge q \in OUTSET(PI^i_{n(j)})$}
	\STATE{{\bf for each} $p \in INSET(PI^i_{n(j)})$ {\bf do} \\send $D[n(j), sig(j) ]$ to $p$;}
	\STATE{{\bf [o]}  store the content with name $n(j)$ in $CS^i$;}

      \STATE{delete $PI^i_{n(j)} $}
  \ELSE
  	\STATE{drop $D[n(j), sig(j) ]$ }

\ENDIF
\STATE{{\bf end function} }
\end{algorithmic}}       
\end{algorithm}

{\bf  Algorithm~\ref{algo-SIFAH-timer}} shows a simple approach to handle  the case when a PIT entry expires with no NDO or NACK being received.  
Given that routers do not initiate Interest retransmissions, router $i$ simply sends NACKs to all  neighbors from which it received Interests for $n(j)$. 
A more sophisticated approach would be needed for the case  in which routers must provide Interest retransmissions in a way similar to on-demand routing protocols that support local repair of route requests. 

\begin{algorithm}[h]
\caption{Process Interest life-time expiration }
\label{algo-SIFAH-timer}
{\fontsize{8}{8}\selectfont
\begin{algorithmic}[1]
\STATE{{\bf function} Process Interest Life-time Expiration}
\STATE {\textbf{INPUT:} $PIT^i$,  
$RT(P^i_{n(j)}) = 0$;
}

\STATE{{\bf for each} $p \in INSET(PI^i_{n(j)})$ {\bf do} \\ send $NI[n(j),   \mathsf{Interest ~expired} ]$}
\STATE{delete $PI^i_{n(j)}$}
     
\STATE{{\bf end function} }
\end{algorithmic}}
\end{algorithm}

{\bf Algorithm~\ref{algo-SIFAH-nack}} states the steps taken to handle NACKs.
Router $i$ forwards the NACK it receives for $n(j)$ to all those neighbors from whom it received Interests for $n(j)$ and deletes the Interest entry after that.
Supporting Interest retransmissions by routers would require a more complex approach for the handling of NACKs. 

\begin{algorithm}[h]
\caption{Process NACK  at router $i$}
\label{algo-SIFAH-nack}
{\fontsize{8}{8}\selectfont
\begin{algorithmic}[1]
\STATE{{\bf function} Process NACK}
\STATE {\textbf{INPUT:} $PIT^i$,  
$NI[n(j), \mathsf{CODE}]$; 
}
\IF{$n(j) \not\in PIT^i$ } 
 	\STATE{drop $NI[n(j),  \mathsf{CODE} ]$ }
\ELSE	
	\STATE{{\bf if}  $k \not\in OUTSET(PI^i_{n(j)})$ {\bf then} drop $NI[n(j),  \mathsf{CODE} ]$; }
	\IF{ $k \in OUTSET(PI^i_{n(j)})$}
		\STATE{{\bf for each} $p \in INSET(PI^i_{n(j)})$ {\bf do} \\send $NI[n(j),   \mathsf{CODE} ]$; }

     		\STATE{delete $PI^i_{n(j)}$}

	\ENDIF
\ENDIF
\STATE{{\bf end function} }
\end{algorithmic}}
\end{algorithm}

 \vspace{-0.1in}
\begin{algorithm}[h]
\caption{Process failure of link  $(i, k)$  at router $i$ }
\label{algo-SIFAH-link}
{\fontsize{8}{8}\selectfont
\begin{algorithmic}[1]
\STATE{{\bf function} Process Link Failure}
\STATE {\textbf{INPUT:} $PIT^i$;  
}
\FOR{{\bf each} $n(j) \in PIT(i)$}
	\IF{$k  \in INSET(PI^i_{n(j)})$}
		\STATE{
		$INSET(PI^i_{n(j)}) = INSET(PI^i_{n(j)}) - \{k\}$; \\
		{\bf if} $INSET(PI^i_{n(j)}) = \emptyset$ {\bf then}
		delete $PI^i_{n(j)}$; 
		}
	\ENDIF
	\IF{$k  \in OUTSET(PI^i_{n(j)})$}	
		\STATE{
		$OUTSET(PI^i_{n(j)}) = OUTSET(PI^i_{n(j)}) - \{k\}$; }	
		\IF{$OUTSET(PI^i_{n(j)}) = \emptyset$}
			\FOR{{\bf each}  $p \in INSET(PI^i_{n(j)})$}
       				\STATE{
					send $NI[n(j),  \mathsf{route ~failed} ]$
					}
			\ENDFOR	
			\STATE{delete $PI^i_{n(j)}$}
		
		\ENDIF
	\ENDIF
\ENDFOR

\STATE{{\bf end function} }
\end{algorithmic}}
\end{algorithm}

{\bf Algorithm~\ref{algo-SIFAH-link}} lists the steps taken by a router in response to the failure of connectivity with a neighbor. 
Reacting to the failure of perceived connectivity with a neighbor over which Interests have been forwarded  could be simply to wait for the life-times of those Interests to expire. However, such an approach can be  very slow  reacting to link failures compared to using Algorithm~\ref{algo-SIFAH-link}.
The algorithm assumes that the control plane updates $FIB^i$ to reflect any changes in hop counts  to name prefixes resulting from the loss of connectivity to one or more neighbors. 
For each Interest that was forwarded over the failed link, router $i$ sends a NACK to all   neighbors whose Interests were aggregated.

\subsection{Examples of SIFAH Operation}

Figures~\ref{no-loop}(a) to (d) illustrate how SIFAH operates using the 
same example used in Figure~\ref{ndn-loop}.
Figures~\ref{no-loop}(a) and (b) address the case in which 
the control plane establishes  multiple paths to each name prefix but 
does not guarantee loop-free routing tables.  Figures \ref{no-loop}(c) and (d) illustrate how SIFAH operates when single-path routing is used.

The pair of numbers next to each link outgoing from a node in Figure~\ref{no-loop}(a) 
indicates the hop count to $n(j)$ through 
a neighbor and the ranking of the neighbor in the FIB. The example assumes that: (a) routers execute a routing protocol that does not enforce loop-free  FIBs; and (b) the ranking of neighbors  is determined independently at each router using some data-plane strategy based on the  perceived performance of each path and interface. 
It should be noted that the distance value of a path need not be directly proportional to the hop-count value of the path shown in the figure. 

Let the tuple  ($v$: $h, r$) indicate a neighbor, its hop count and its ranking. In Figure~\ref{no-loop}(a),  $FIB^a$  lists ($b$: 7, 1), ($p$: 7, 2), and 
($x$: 9, 3), which  is  shown in green font. Similarly,  $FIB^y$ states ($a$: 8, 1);
$FIB^b$ states ($c$: 10, 2), ($a$: 8, 1), and  ($q$: 6, 3); $FIB^c$ states ($b$: 7, 1), ($x$: 9, 2), and ($r$: 9, 3); and $FIB^x$ states ($a$: 8, 1) and ($c$: 8, 2). 
Some of the FIB entries for $p$, $q$ and $r$ are shown in  black font.

\vspace{-0.1in}
\begin{figure}[h]
\begin{centering}
    \mbox{
    \subfigure{\scalebox{.24}{\includegraphics{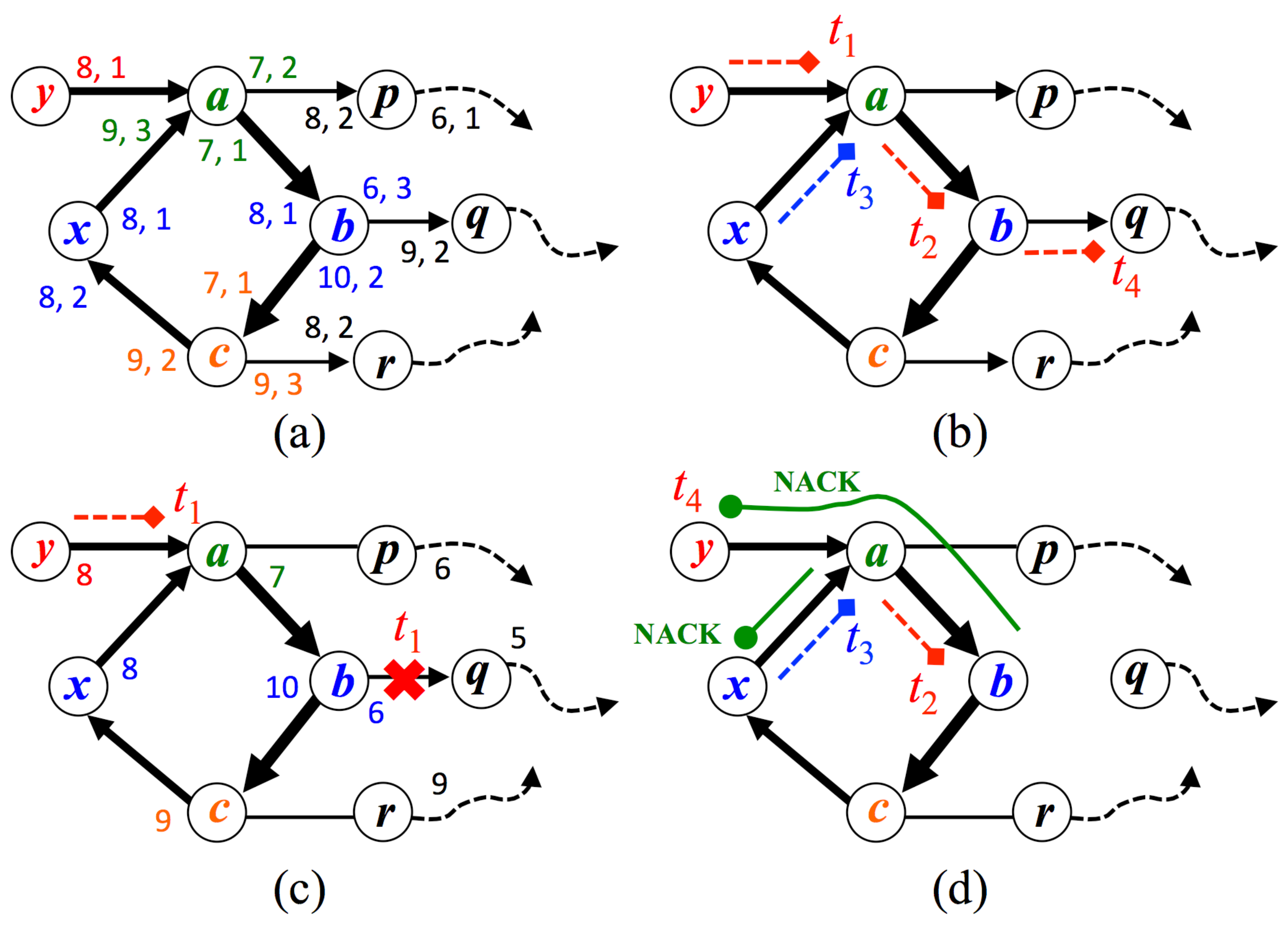}}}
      }
 \vspace{-0.1in}
   \caption{Interest looping is avoided or detected with SIFAH
   }
   \label{no-loop}
\end{centering} 
\end{figure}  

In Figure ~\ref{no-loop}(b),  router $y$ originates an Interest for 
$n(j)$ and sends $I[n(j), h^I(y) = 8]$
to $a$. Router $a$ receives the Interest from router $y$ at time $t_1$ and, 
given that $8 = h^I(y) >$ $h(a, n(j)^*, b) = 7$, 
it accepts the Interest because it has at least one neighbor
that satisfies HFAR. Router $a$ sends  $I[n(j), h^I(a) = 7]$ to $b$ because 
it is the highest-ranked neighbor satisfying HFAR.  Router $a$ aggregates $I[n(j),$ $ h^I(x) = 8]$ at time $t_3 > t_1$, because it sent $I[n(j),$ $ h^I(a) = 7]$ at time $t_1$ and $8 = h^I(x) > h^I(a) = 7$. 
Router $b$ receives the Interest from $a$ at time $t_2 > t_1$;  
accepts it because it has at least one neighbor that satisfies HFAR
($7 = h^I(a) > h(b, n(j)^*, q) = 6$); and sends $I[n(j), h^I(b) = 6]$ to $q$ because  $q$ is the highest-ranked neighbor of $b$ that satisfies HFAR.  
This is an example that Interests are forwarded along loop-free paths 
if SIFAH is used and the FIBs maintained by routers have consistent information, even if some of the multi-paths implied in the FIBs involve loops.
The next section proves this result in the general case.

Figure \ref{no-loop}(c) shows the hop count values stored in the FIBs for name prefix $n(j)$  when single-path routing is used. Each router has a single next hop and one hop count for each prefix listed in its FIB.  
Router $b$ updates its FIB to reflect the failure of link $(b, q)$  at time $t_1$, while router $y$ sends an Interest to router $a$ requesting $n(j)$.
Routers have inconsistent FIB states for $n(j)$ while routing updates propagate and Interests are being forwarded.

As shown in Figure \ref{no-loop}(d), router $b$ {\em must} send  $NI[n(j),  \mathsf{loop}]$ to $a$, 
because $7 = $ $h^I(a) \not> $ $h(b, n(j)^*, c) = 10$ and HFAR is not satisfied. In turn, when $a$ receives the NACK from $b$, it must forward $NI[n(j),  \mathsf{loop}]$  to $y$ and to $x$. Eventually, the routing protocol running in the control plane makes routers $a$ and $y$ change the hop count to $n(j)^*$ in their FIBs to reflect the failure of link $(b, q)$. At that point, a retransmission of the Interest from $y$ would state $h^I(y) = 9$ and would make $a$ forward  $I[n(j), h^I(a) = 8]$ to $p$.

\section{Correctness of SIFAH}
\label{sec-correct}

The following theorems show that SIFAH enforces  correct Interest forwarding and aggregation, and constitutes a safe Interest forwarding strategy.
The results  are independent of whether the network is static or dynamic, the specific caching strategy used in the network (e.g., at the edge or along paths traversed by NDO messages \cite{caching}), or the retransmission strategy used by content consumers after experiencing g a timeout or receiving a NACK from attached routers. SIFAH ensures that Interests cannot be incorrectly propagated and aggregated along  loops without 
meeting routers that detect the incorrect forwarding and hence send  NACKs in return. 

\begin{theorem}
\label{theo3}
Interest loops cannot occur and be undetected in a network in which SIFAH is used.
\end{theorem}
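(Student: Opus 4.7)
The plan is a proof by contradiction that exploits the strict monotonicity of hop-count values that HFAR forces on any chain of accepting routers. Suppose, towards a contradiction, that a loop $L = \{v_1, v_2, \ldots, v_h, v_1\}$ for NDO $n(j)$ arises in a network running SIFAH and that no router in $L$ detects the loop. Then by the definition of an Interest loop (Section~\ref{sec-loop}), each router $v_k$ either forwards the Interest it receives from $v_{k-1}$ onward to $v_{k+1}$ or aggregates it into an existing PIT entry whose outgoing neighbor is $v_{k+1}$; in either case, $v_k$ must \emph{accept} the Interest from $v_{k-1}$ under HFAR.

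Next I would introduce, for every $v_k \in L$, the quantity $h^I(v_k)$, which is the hop count that $v_k$ stamped on the Interest it forwarded to $v_{k+1}$ (equivalently, the value stored in $PI^{v_k}_{n(j)}$ by Algorithm~\ref{algo-SIFAH-fw}). I would then argue that, whichever branch of HFAR router $v_k$ used to accept the Interest from $v_{k-1}$, we must have $h^I(v_{k-1}) > h^I(v_k)$. In Condition~1, $v_k$ had no PIT entry yet, so Algorithm~\ref{algo-SIFAH-fw} selected some $v \in S^{v_k}_{n(j)^*}$ with $h^I(v_{k-1}) > h(v_k, n(j)^*, v)$ and set $h^I(v_k) = h(v_k, n(j)^*, v)$, yielding the inequality. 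In Condition~2, $v_k$ aggregated, which requires exactly $h^I(v_{k-1}) > h^I(v_k)$ as a matter of definition. Either way, the strict inequality holds.

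Applying this around the cycle, I obtain the chain
\[
h^I(v_1) > h^I(v_2) > \cdots > h^I(v_h) > h^I(v_1),
\]
a contradiction, so the loop cannot have been traversed without some router refusing to accept the Interest and sending an $NI[n(j), \mathsf{loop}]$ per Algorithm~\ref{algo-SIFAH-Interest}. Hence any loop that does form is detected by at least one router, which proves the theorem.

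The delicate point I would want to emphasize is that the values $h^I(v_k)$ must be well defined at the moments of acceptance even though FIBs and PITs evolve independently and asynchronously. This is precisely the role of storing $h^I(i)$ inside the PIT entry in Algorithm~\ref{algo-SIFAH-fw}: once $v_k$ forwards its Interest it locks in the hop count it committed to, so subsequent FIB changes cannot invalidate the inequality used in the proof. Making this invariant explicit is the main obstacle in turning the informal chain of inequalities into a rigorous argument, and I would state it as a short intermediate lemma (``the value $h^I(v_k)$ recorded in $PI^{v_k}_{n(j)}$ is immutable for the lifetime of that PIT entry'') before invoking the cyclic contradiction above.
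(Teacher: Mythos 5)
Your proposal is correct and follows essentially the same argument as the paper's proof: a contradiction derived from the strict decrease of $h^I(\cdot)$ around the cycle, treating the forwarding branch (Condition~1 of HFAR, where $h^I(v_k)$ is set to $h(v_k,n(j)^*,v) < h^I(v_{k-1})$) and the aggregation branch (Condition~2) separately. Your closing remark about the immutability of $h^I(v_k)$ in the PIT entry is a point the paper makes in its design discussion rather than inside the proof, but it does not change the substance of the argument.
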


\begin{proof}
Consider a network in which SIFAH is used.  Assume for the sake of contradiction that  nodes in a loop  $L$ of $h$ hops  $\{ v_1 , $ $v_2 , ..., $ $v_h , v_1 \}$  send and possibly aggregate
Interests for $n(j)$ along $L$, with no node in $L$  detecting the incorrect forwarding of any of the Interests sent over the loop.

Given that  $L$ exists by assumption, $v_k \in L$ must send $I[n(j), h^I(v_{k}) ]$ to node $v_{k+1} \in L$ for $1 \leq k \leq h - 1$, and $v_h \in L$ must send $I[n(j), h^I(v_{h}) ]$ to node $v_{1} \in L$. 
For $1 \leq k \leq h - 1$, let $h(v_k, n(j)^*)^L$ denote the value of $h^I (v_k)$ when node 
$v_k$   sends $I[n(j), h^I(v_{k}) ]$  to node $v_{k+1}$,  
with $h(v_k, n(j)^*)^L = h(v_k, n(j)^*, v_{k+1})$. 
Let $h(v_h, n(j)^*)^L$ denote 
the value of $h^I (v_h)$ when when node $v_h$ sends
$I[n(j), h^I(v_{h}) ]$ to node $v_1 \in L$, with $h(v_h, n(j)^*)^L =$  $ h(v_h, n(j)^*, v_1)$. 

Because no node in $L$ detects the incorrect forwarding  of an Interest, each node in $L$ must aggregate the Interest it receives from the previous hop in $L$ or it must send its own Interest as a result of the Interest it receives from the previous hop in $L$. This implies that
$v_k \in L$ must accept $I[n(j), h^I(v_{k-1}) ]$ before $RT(PI^{v_k}_{n(j)})$ expires
for $1 \leq k < h $, and  $v_1 \in L$ must accept $I[n(j), h^I(v_{h}) ]$
before $RT(PI^{v_1}_{n(j)})$ expires.

According to SIFAH, if $v_k$ aggregates $I[n(j),$ $ h^I(v_{k-1}) ]$, then it must be true that $h^I(v_{k-1}) > h^I(v_{k})$. 
Similarly, if $v_1$ aggregates $I[n(j), h^I(v_{h}) ]$, then it must be the case that  $h^I(v_{h}) > h^I(v_{1})$. 

On the other hand, if $v_k$ sends $I[n(j), h^I(v_{k}) ]$ to $v_{k+1}$ as a result of receiving 
$I[n(j), h^I(v_{k-1}) ]$ from $v_{k-1}$, then 
it must be true that  $h^I(v_{k - 1}) > h(v_{k}, n(j)^*)^L = h^I(v_{k}) $ for $1 < k \leq h$.
Similarly, if
$v_1$ sends $I[n(j), h^I(v_{1}) ]$ to $v_{2}$ as a result of receiving 
$I[n(j), h^I(v_{h}) ]$ from  $v_{h}$, then  $h^I(v_{h}) > h(v_{1}, n(j)^*)^L = h^I(v_{1}) $.

It follows from the above argument that, for $L$ to exist when each node in the loop follows SIFAH to send Interests asking for $n(j)$, it must be true that 
$h^I(v_{h}) > h^I(v_{1}) $
and
$h^I(v_{k - 1}) >  h^I(v_{k}) $ for $1 < k \leq h$.
However, this is a contradiction, because it implies that $h^I(v_{k}) >  h^I(v_{k})$ for $1 \leq k \leq h $.
Therefore, the theorem is true.\end{proof}

The proof of Theorem 3 can be augmented to account for Interest forwarding strategies based on complex distance values rather than hop counts.

To be safe, an Interest forwarding strategy must ensure that either an NDO message with the requested content or a NACK is received within a finite time by the consumer who issues an Interest. The following theorem shows that this is the case for SIFAH, independently of the state of the topology or the fate of messages.

\begin{theorem}
\label{theo4}
SIFAH ensures that an NDO message for name $n(j)$  or a NACK  is received within a finite time by any consumer who issues an Interest for NDO with name $n(j)$.
\end{theorem}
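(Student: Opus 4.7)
The plan is to show that for any Interest $I[n(j), h^I(c)]$ issued by a consumer $c$, some response (either an NDO message $D[n(j), sig(j)]$ or a NACK $NI[n(j), \mathsf{CODE}]$) must arrive at $c$ within a time bounded by the maximum Interest lifetime $MIL$ set at $c$'s local router. The argument has three pieces: forward termination of Interest propagation, backward propagation of responses along $INSET$ chains, and a fall-back guarantee from PIT expiry that applies even when messages or links fail.

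First I would analyze forward propagation. By Algorithm~\ref{algo-SIFAH-Interest}, when a router $i$ receives $I[n(j), h^I(k)]$ exactly one of four outcomes occurs: (i) the content is in $CS^i$ and an NDO is immediately returned to $k$; (ii) a NACK is returned because either $n(j)^* \not\in FIB^i$ or HFAR fails; (iii) the Interest is aggregated into an existing $PI^i_{n(j)}$; or (iv) Algorithm~\ref{algo-SIFAH-fw} forwards a new Interest $I[n(j), h^I(i)]$ with $h^I(i) = h(i, n(j)^*, v) < h^I(k)$ by HFAR condition~1. Thus the hop count carried by forwarded Interests strictly decreases along any chain of forwarding routers. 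Since hop counts are nonnegative integers and Theorem~\ref{theo3} rules out undetected loops, every forwarding chain terminates in finitely many hops at a router that either delivers the content or emits a NACK.

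Next I would track the return path. By Algorithm~\ref{algo-SIFAH-Data}, an NDO arriving at router $i$ from $q \in OUTSET(PI^i_{n(j)})$ is forwarded to every neighbor in $INSET(PI^i_{n(j)})$; by Algorithm~\ref{algo-SIFAH-nack} a NACK is similarly multicast back along $INSET$. Hence any response produced downstream propagates back toward every consumer whose request was aggregated into the pending Interest. To guarantee finiteness even when messages are lost or downstream routers have already cleaned up state, I would invoke Algorithm~\ref{algo-SIFAH-timer}: the PIT entry at $c$'s local router was created with $RT = MIL$, and if no upstream response resolves it within $MIL$ units of time, Algorithm~\ref{algo-SIFAH-timer} emits a NACK directly to $c$. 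Algorithm~\ref{algo-SIFAH-link} provides the same guarantee under link failures, by draining $OUTSET$ or $INSET$ and generating NACKs when $OUTSET$ becomes empty.

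The hard part will be verifying that the $INSET$ chain from downstream responders back to $c$ remains intact under the concurrent operations of the algorithms --- in particular, that aggregation at any intermediate router $i$ always results in $c$'s upstream neighbor being recorded in $INSET(PI^i_{n(j)})$, and that no interleaving of NACK forwarding, link-failure handling, and PIT expiry deletes $PI^i_{n(j)}$ before the signal it is meant to convey has been transmitted to every current member of $INSET$. Establishing this requires a case analysis over Algorithms~\ref{algo-SIFAH-Interest}, \ref{algo-SIFAH-Data}, \ref{algo-SIFAH-timer}, \ref{algo-SIFAH-nack}, and~\ref{algo-SIFAH-link} to prove the invariant that a PIT entry is deleted only after a response has been forwarded to every neighbor currently in its $INSET$, or else an equivalent notification will be produced by a downstream PIT expiry within $MIL$. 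Once this invariant is in hand, combining it with the forward-termination bound yields the finite-time guarantee claimed by the theorem, with the $MIL$ timeout at $c$'s local router providing the ultimate upper bound on the consumer's wait.
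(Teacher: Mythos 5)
Your plan is sound and rests on the same pillars as the paper's proof---Theorem~\ref{theo3} to dispose of loops, the relay of NDO messages and NACKs back along $INSET$ via Algorithms~\ref{algo-SIFAH-Data} and~\ref{algo-SIFAH-nack}, and the PIT-expiration and link-failure handlers (Algorithms~\ref{algo-SIFAH-timer} and~\ref{algo-SIFAH-link}) as the catch-all for lost messages---but you organize it differently. The paper cases on the fate of the Interest (no route to $n(j)^*$, traversal of a loop, or a simple path to a responder $d$) and argues each case to a returned NDO or NACK; you instead decompose by phase: forward termination, backward relay, and timeout fall-back. Your observation that the carried hop count strictly decreases at each forwarding router (by HFAR condition~1 and Algorithm~\ref{algo-SIFAH-fw}) gives a clean, quantitative bound on the length of any forwarding chain that the paper never states explicitly; it is a nice addition, though note that it alone already forces termination of the chain, so invoking Theorem~\ref{theo3} there is slightly redundant---Theorem~\ref{theo3} is really needed to guarantee that a would-be loop produces a NACK rather than being an issue of termination. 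You also get a sharper conclusion than the paper's bare ``finite time,'' namely an explicit bound of roughly $MIL$ at the consumer's access router. The one caveat is that your ``hard part''---the invariant that a PIT entry is never deleted before every member of its $INSET$ has been sent a response or is covered by a downstream expiration---is stated but not proved; to be fair, the paper's own proof silently assumes exactly this when it appeals to Algorithms~\ref{algo-SIFAH-timer} and~\ref{algo-SIFAH-link}, so you have identified a genuine gap in the published argument rather than introduced one, but a complete write-up would need to carry out the case analysis you sketch.
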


\begin{proof}
Consider  
$I[n(j), h^I (s) ]$ being issued by consumer $s$ at time $t_1$. 
The forwarding of Interests assumed in SIFAH is based on the best match of the requested NDO name with the prefixes advertised in the network. Furthermore, according to Algorithm 3, a router sends back an NDO message to a neighbor that sent an Interest for NDO $n(j)$ only if has an exact match of the name $n(j)$ in its content store. According to Algorithm 5, a router that receives an NDO message in response to an Interest it forwarded must forward the same NDO message. Hence, the wrong NDO message cannot be sent in response to an Interest.
There are three cases to consider next: (a) there are no routes to the name prefix $n(j)^*$ of the requested NDO, (b) the Interest  traverses an Interest loop, or (c) the Interest traverses a simple path towards a router $d$ that can reply to the Interest. 

{\em Case 1:} If there is no route to $n(j)^*$, then it follows from the operation of SIFAH (Algorithm 4) that a router issues a NACK stating that there is no route. That NACK is either forwarded successfully back to $s$ or is lost due to errors or faults. In the latter case, it follows from Algorithms 6 and 8 that a router must send a NACK back towards $s$ stating that the Interest expired or the route failed. 

{\em Case 2:}  If $I[n(j), h^I (s) ]$  is forwarded along an Interest loop and does not reach any node with a copy of $n(j)$, then it follows from  Theorem \ref{theo3} that the Interest must either reach some router $k$ that detects the incorrect forwarding of the Interest and must issue a NACK $NI[n(j), \mathsf{loop} ]$ in response, or the Interest is dropped due to faults or transmission errors before reaching such router $k$.

If $NI[n(j), \mathsf{loop} ]$ reaches a router $k$ that detects the loop and issues  $NI[n(j), \mathsf{loop} ]$,  then
according to SIFAH (Algorithm 7), every router receiving the NACK $NI[n(j), \mathsf{loop} ]$ originated by router $k$ from the neighbor to whom the Interest was sent must relay 
the NACK towards $s$. Hence, if no errors or faults prevent the NACK from reaching $s$, the consumer receives a NACK stating that an Interest loop was found.

On the other hand, if either the Interest traversing an Interest loop or the NACK it induces at some router $k$ is lost, it follows from Algorithms 6 and 8 that a router between $s$ and  router $k$ must send a NACK towards $s$ indicating  that the Interest expired or that the route failed. Accordingly,  consumer $s$ must receive a NACK within a finite time after issuing its Interest in this case.

{\em Case 3:} If the Interest  traverses a simple path towards a router $d$ that advertises $n(j)^*$ or has a content store containing $n(j)$, then the Interest must either reach $d$ or not. 

If the Interest is lost and does not reach $d$, then it follows from Algorithms 6 and 8 that a router between $s$ and router $d$  must send a NACK towards $s$ indicating  that the Interest expired or that the route failed. As a result, $s$ must receive a NACK originated by some router between $s$ and $d$.

If the Interest reaches $d$, then that router must either send the requested NDO back, or (in the case that $d$ advertises $n(j)^*$ and $n(j)$ does not exist) issue a NACK stating that 
$n(j)$ does not exist. According to Algorithms 5 and 7, the NDO message or NACK originated by $d$ is forwarded back towards $s$ along the reversed simple path traversed by the Interest.  If no fault or errors occur between $d$ and $s$, it follows that the theorem is true for this case. Alternatively, if the NDO or NACK originated by $d$ is lost due to faults or errors, it follows from  Algorithms 6 and 8 that a router between $s$ and  router $d$ must send a NACK towards $s$ indicating  that the Interest expired or that the route failed. 
\end{proof}

\section{Performance Comparison}
\label{sec-perf}

We compare SIFAH with NDN and the original CCN forwarding strategy in terms of the storage complexity of the approaches;
the average time that a PIT entry remains in the PIT waiting for an NDO message or a NACK to be received in response, which we call PIT entry pending time; the end-to-end delay experienced by content consumers in receiving either the content they request or negative feedback; and the number of entries in the PITs maintained by content routers.

The storage complexity of each approach provides an indication of the storage overhead induced by the type of information required for routers to detect Interests loops. The simulation results we present on  PIT entry pending times, end-to-end delays, and PIT sizes 
should be viewed simply as indications of the negative effects that undetected Interest loops have on the performance of NDN and CCN, and the fact that they can be completely avoided using SIFAH.

\subsection{Storage Complexity}

There is a large difference in the storage overhead incurred with 
the NDN  forwarding strategy compared to SIFAH.

In SIFAH, router $i$ uses only the value of $h^I(i)$ to determine whether the Interest it receives from $k$ may be traversing an Interest loop, and does not store $h^I(k)$. Hence, the PIT storage size 
for SIFAH is  
\[
SS_{SIFAH} = O(( INT + |mh| )  |PIT^i|_{SIFAH})
\]
\noindent
where $|PIT^i |_{SIFAH}$ is the number of pending Interests in $PIT^i$ when SIFAH is used,  $|mh|$ is the number of bits used to store $h^I(i)$, and
$INT$ is the average storage required to maintain information about the incoming and outgoing neighbors for a given Interest.
For a given NDO with name $n(j)$, the amount of storage needed to maintain the incoming and outgoing neighbors is 
\[
INSET(PI^i_{n(j)}) + OUTSET(PI^i_{n(j)}).
\]

The NDN  forwarding strategy requires each router to store the list of different  nonces used to denote valid Interests for a given NDO name $n(j)$. With each nonce being of size $|id|$ and router $i$ having up to $I$ neighbors that send valid Interests for an NDO, the 
PIT storage size for NDN is 
\[SS_{NDN}  = O((INT + |id| I )   ~|PIT^i |_{ NDN})
\]
\noindent
 where $|PIT^i |_{ NDN}$ is the number of pending Interests in $PIT^i$ when NDN is used. 
Hence, even if $|PIT^i |_{ NDN}$ is the same as $ |PIT^i |_{ SIFAH}$, 
the amount of additional PIT storage needed in NDN over SIFAH is 
\begin{eqnarray*}
&&SS_{NDN}- SS_{SIFAH} \geq
\\
& & (|id| I)(|PIT^i |_{ NDN}) -  (|mh|) (|PIT^i |_{ NDN}).
\end{eqnarray*}

A   maximum hop count  of 255 for an Interest
is more than enough. Hence,  with the  size of  a nonce  in NDN of four bytes, the savings in PIT storage obtained with SIFAH  compared to NDN  is
$(32 I - 8)~|PIT^i |_{ NDN}$. This  represents enormous savings   of RAM in large networks.
Furthermore, because the NDN forwarding strategy may not detect loops when Interests are aggregated,  many Interest entries in PITs 
may have to be stored until their lifetimes expire. Accordingly, 
$|PIT^i |_{SIFAH}$ can be much smaller than $|PIT^i |_{ NDN}$. This is confirmed by the simulation results presented subsequently.

The additional FIB storage overhead  in SIFAH compared to the NDN forwarding strategy consists of storing the hop count information for each prefix $n(j)^*$ from each neighbor. This amounts to $ (|mh|)(  |FIB^i|)D^i$ at router $i$, where $D^i$ is the  number of neighbors of router $i$ and 
$|FIB^i|$ is the number of entries in $FIB^i$. Given that $D^i$ and $I$ are of the same order and $O(|FIB^i|) < O(|PIT^i|)$, this  is far smaller than the additional PIT storage needed by the NDN forwarding strategy compared to SIFAH.

\subsection{Performance Impact of \\Undetected Interest Loops }

\subsubsection{ Implementation of Forwarding Strategies \\in ndnSIM}

We implemented SIFAH in ndnSIM,   an open-source NS-3 based simulator for Named Data Networks and Information Centric Networks \cite{ndnSIM}. Following the NDN architecture, ndnSIM is implemented as a new network-layer protocol model, which can run on top of any available link-layer protocol model, as well as on top of network-layer and transport-layer protocols. 

We used the NDN implementation of its data plane from ndnSIM without any modifications.
The ndnSIM NDN implementation  is capable of detecting simple loops by matching nonces.
The PIT entry expiration time for NDN is set to the default of one second.  It should be pointed out that, in the default NDN implementation, a router that receives a duplicate Interest simply drops the Interest without sending a NACK back. This corresponds to the original CCN forwarding strategy. The ndnSIM NDN implementation also allows the use of NACKs after Interest loop detection. The results presented in this section for  ``CCN" correspond to the ndnSIM implementation of NDN without NACKs, and the results presented for  ``NDN" correspond to the ndnSIM implementation of NDN with NACKs enabled. 

To implement Algorithms 3 to 8 defining SIFAH in ndnSIM, we had to make some modifications on the basic structures of ndnSIM, namely:  the FIBs, Interest packets, NACKs, and the forwarding strategy. 
A new field ``rank'' is added to every entry of the FIB. Unlike ndnSIM in which the next hop selection for requested prefixes is based on hop count, in SIFAH next hops are sorted based on rank of each FIB entry.  
The field $h(k)$ was added to each Interest message, which determines the hop count from forwarding node $k$ to the prefix requested by the Interest. 
A new type of NACK for loop detection is added and the behavior of forwarding strategy for NACKs is modified based on SIFAH definitions. Furthermore, a new class of forwarding strategy is added to ndnSIM that implements SIFAH functions.
 
\subsubsection{Simulation Scenarios}

To isolate the operation of the data plane from the performance of different routing protocols operating in the control plane, we used  static routes and  manually configured routing loops for specific prefixes. 

Given the use of static routes and configured loops, we used a simple grid topology of sixteen nodes with two consumers producing Interests with different prefixes and one producer announcing the  content requested in the Interests. Interest traffic is generated at a constant bit rate with a frequency of 2000 Interests per second. The delay over each link of the topology is set t to 10 msec  and PIT entry expiration time is set to only 1000 msec, which is too short for real networks but is large enough to illustrate the consequences of undetected Interest loops.  

Five different scenarios, each lasting  90 seconds of simulation time,
were used to compare SIFAH with NDN and CCN. Each scenario is defined by the percentage of Interests traversing loops, which was set to equal 
0\%, 10\%, 20\%, 50\%, and 100\%  of the Interests generated by consumers.
In practice, it should be the case that  only a small fraction of Interests traverse loops, assuming a correct routing protocol is used in the control plane and sensible policies are used to rank the available routes in the FIBs.
The scenarios we present illustrate that 
just a few Interests traversing undetected loops cause performance degradation, and that network performance is determined by the PIT entry expiration times as the fraction of Interests traveling loops increases.

\begin{figure}[h]
\begin{centering}
    \mbox{
    \subfigure{\scalebox{.28}{\includegraphics{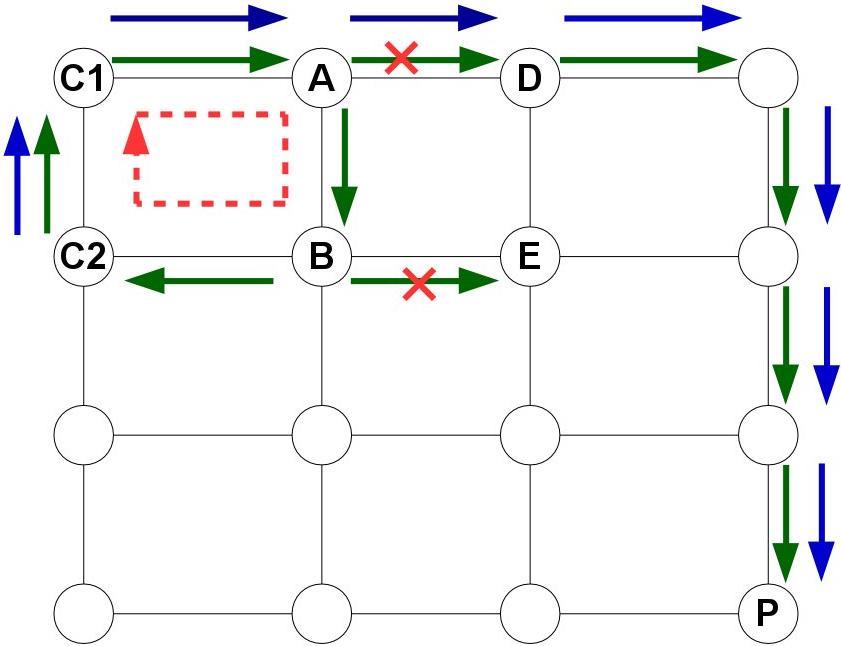}}}
      }
 \vspace{-0.1in}
   \caption{Initial Routes and Custom Loop Scenario}
   \label{loop}
\end{centering} 
\end{figure} 

Figure ~\ref{loop} shows the topology and scenario we used in our simulations. Consumer C1, produces Interests for $prefix_1$ and $prefix_2$, and consumer C2 produces Interests only for $prefix_2$. Blue arrows and green arrows shows initial routes for $prefix_1$ and $prefix_2$, respectively. We assume that the route between nodes A and D, and the route between nodes B and E for $prefix_2$ are disconnected. Therefore, Interests requesting $prefix_2$ use alternate paths from node A to node B and from node B to node C2, which causes the looping of such Interests.

Interests for $prefix_2$ generated by C1 and C2, request the same content at approximately the same time, so that aggregation can take place at routers along the paths traversed by Interests. This results in the aggregation of Interests at node C1 for Interests generated by C2, and the aggregation of Interests  at node C2 for Interests generated by C1. Our simple scenarios provide enough insight on the negative impact of undetected Interest loops in the presence of Interest aggregation using NDN and the original CCN design. 

Simulation results are shown for three different forwarding strategies: The original CCN, NDN, and SIFAH. The
difference between CCN and NDN is that CCN  does not send NACKs when duplicate Interests are detected. On the other hand, NDN sends NACK when simple loops are detected by receiving duplicate Interests.

\subsubsection{Impact on PIT Entry Duration}

Figure ~\ref{pending-time} shows the average value of the PIT entry pending time for all PIT entries. When no Interest loops are present,  NDN, CCN and SIFAH 
exhibit the same performance, with each having an average PIT entry pending time of 60 msec.  This should be expected, given that Interests and NDO messages traverse shortest paths between consumers and producers or caches.

The average PIT entry pending time in SIFAH does not increase  as  he percentage of Interests that 
encounter  Interest loops  increases. 
The reason  for this  is that  SIFAH ensures that an  Interest must elicit either an NDO message or a NACK to be sent back from some router along the route it traverses back to the consumer that originates the Interest. Hence, the average amount of time an Interest entry spends in the PIT is a function of the round-trip time it takes for either an NDO message or a NACK to evict it from the PIT. This is proportional to a round-trip time between a consumer and a router with the content or a router  at which HFAR is not satisfied, which is a few milliseconds in the grid topology.

\vspace{-0.1in}
\begin{figure}[h]
\begin{centering}
    \mbox{
    \subfigure{\scalebox{.38}{\includegraphics{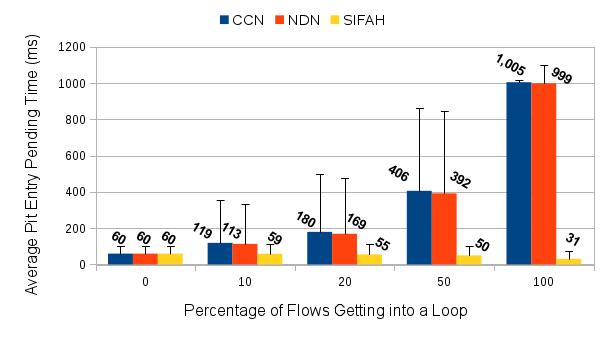}}}
      }
\vspace{-0.2in}
   \caption{Average PIT entry pending time for CCN, NDN, and SIFAH}
   \label{pending-time}
\end{centering} 
\end{figure}

By contrast, the average PIT entry pending time in CCN and NDN increases dramatically with the percentage of  Interests that 
encounter  Interest loops. The percentage of Interests that traverse loops need not be large to have negative performance consequences. For the scenario in which  10\%  of the Interests 
encounter Interest  loops, we observe that the average PIT entry pending time increases dramatically in NDN and CCN, with the average PIT entry pending time being 113 msec, which is about twice the average PIT entry pending time in SIFAH.  

The results for NDN and CCN can be easily explained. CCN simply deletes and drops duplicate Interests, each Interest that encounters an Interest loop is discarded by the router that detects a duplicate Interest, and this action forces the corresponding PIT entries in the routers traversed by the Interest to remain in those PITs, until their PIT entry expiration timers expire. 
In NDN, Interest loops  can go undetected with aggregation and therefore no NACKs are sent in those cases. As a result, the time an Interest entry spends in the PIT  equals the PIT entry expiration time if the Interest traverses an undetected loop.
The results are almost the same for CCN and NDN. The reason for observing slightly lower values for NDN compared to CCN, is that some of the Interests for content in $prefix_2$ are not generated by C1 and C2 with sufficient time correlation to enable  Interest aggregation, which results in detection of Interest loops in NDN and Interests being discarded in CCN.

The PIT entry pending times in NDN and CCN are many orders of magnitude larger for Interests that traverse undetected Interest loops. This is unavoidable,  given that the PIT entry pending time is  proportional to a PIT entry expiration time, which by design must be set conservatively to values that are far longer than average round-trip times between consumers and producers. 
In the simulations, the PIT entry expiration time is just one second.

\subsubsection{Impact on PIT Size}

Figure ~\ref{PitSize} shows the average size of PIT tables in terms of number of entries for a router included in Interest  flows for five different scenarios comparing CCN, NDN, and SIFAH. CCN, NDN and SIFAH have exactly the same PIT size in the absence of Interest loops, which is expected. As the  percentage of Interests that  encounter loops increases, the average number of entries in the PITs increases dramatically for CCN and NDN. For the case in which  only 10\%  of Interests encounter loops,  the number of entries doubles in NDN and CCN compared to SIFAH. For the case in which  100\% of Interests encounter loops, the average number of PIT entries in CCN and NDN is 1889 and 1884, respectively,  while the number of  PIT entries in SIFAH 
actually decreases. 

\begin{figure}[h]
\begin{centering}
    \mbox{
    \subfigure{\scalebox{.38}{\includegraphics{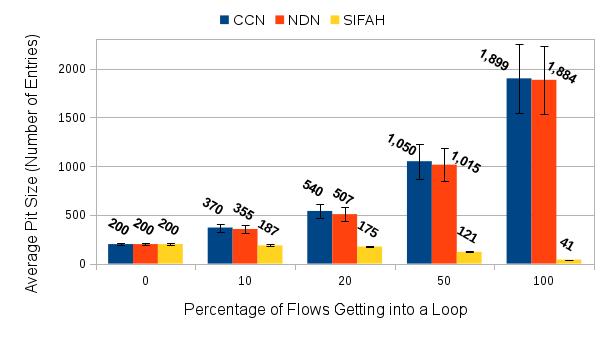}}}
      }
       \vspace{-0.2in}
   \caption{Average PIT table size for CCN, NDN, and SIFAH}
   \label{PitSize}
\end{centering} 
\end{figure} 

The reason for the decrease in average number of PIT entries for SIFAH as the percentage of Interests that encounter  loops increases is a consequence of the shorter round-trip times between the consumers submitting Interests and the routers sending NACKs  compared to the   round-trip times of paths  to the producers of requested content.

\subsubsection{Impact on Round-Trip Times}

Figure ~\ref{RTT} shows the average round-trip time (RTT) for all five scenarios for CCN, NDN and SIFAH. In the simulation experiments, the round-trip time is considered to be the time elapsed  from the instant when an  Interest is first sent to the instant when an NDO message or a NACK is received by the consumer who created the Interest.  

For the case of no loops, CCN, NDN, and SIFAH have the same average RTT. 
When the percentage of Interests traversing loops is 10\%, the average RTT in CCN and NDN increases to almost two times the average RTT in SIFAH, and some Interests have much larger RTTs than the average. As the percentage of Interests that loop increases, the average RTT becomes proportional to the PIT entry expiration time, which is to be expected.
The average RTT in SIFAH decreases as more Interests traverse loops, which is a result of the shorter RTTs between consumers and routers sending the NACKs.

\begin{figure}[h]
\begin{centering}
    \mbox{
    \subfigure{\scalebox{.38}{\includegraphics{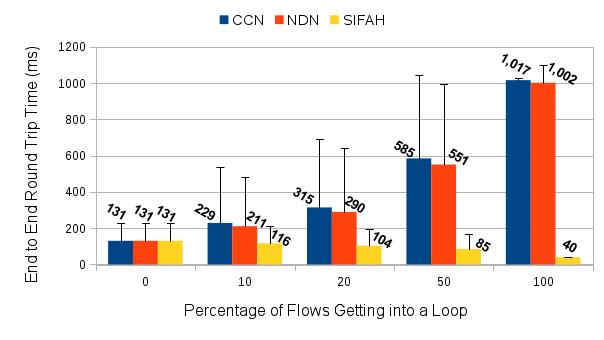}}}
      }
\vspace{-0.2in}
   \caption{Average round trip time (RTT)  for CCN, NDN, and SIFAH}
   \label{RTT}
\end{centering} 
\end{figure} 

\subsection{Design Implications}

The simulation experiments we have presented are meant only to help illustrate the negative impact of undetected Interest loops when they occur, rather than to provide representative scenarios of the performance of 
Interest-based forwarding strategies in large networks. Our results illustrate that  loops in FIBs need not be long lasting or impact a large percentage of Interest to cause the number of  stored PIT entries and end-to-end delays to increase quickly.

As we have shown, the PIT storage requirements  for SIFAH are smaller than those for the original CCN and NDN forwarding strategies. Thus, SIFAH is more efficient than CCN and NDN even in the absence of Interest loops.
Given that SIFAH is so easy  to implement in the context of CCN and NDN, it makes practical sense to eliminate the current 
practice in NDN and CCN of  attempting to detect Interest loops by the matching of nonces 
and Interest names, which does not work.   

\section{Conclusions}

We  showed that the forwarding strategies in NDN and the original CCN architectures  may fail to detect Interest loops when they occur, and that a correct forwarding strategy that supports Interest aggregation cannot be designed simply by  identifying each Interest uniquely and deciding that there is an Interest loop based on the matching of Interest names and nonces. 
 
We  introduced the Strategy for Interest  Forwarding and Aggregation with Hop-counts  (SIFAH). It  is the first Interest-based forwarding strategy shown to be correct in the presence of Interest loops, Interest aggregation, faults, and the forwarding of Interests over multiple paths. SIFAH operates by requiring that FIBs store the next hops and the hop count through such hops to named content, and by having each Interest state the name of the content requested and the hop count from the relaying router to the content. 

We showed that SIFAH incurs less storage overhead than using nonces to identify Interests. We also showed that, if NDN or the original CCN design  is used in a network,  the number of PIT entries and end-to-end delays perceived by consumers can increase substantially with just a fraction of Interests traversing undetected loops.  Although our  simulation experiments assumed a very small network,  our results provide sufficient insight on the negative effects of undetected Interest loops in NDN and the original CCN design.

This  work is just a first step in  the definition of correct Interest-based  forwarding strategies, and it is applicable to any Interest retransmission approach. For simplicity, we assumed that content consumers are in charge of Interest retransmissions and that routers do not provide local repair of Interests after receiving NACKs or detecting link failures. The design of an efficient Interest retransmission strategy and determining whether Interest retransmissions by routers improves performance are arguably the most important next steps.  However, SIFAH provides the necessary foundation to define a correct retransmission strategy, because it guarantees that each Interest results in an NDO message or a NACK being sent to the consumer who originated the Interest.

More work is also needed to understand the performance of SIFAH in large networks, the effect of PIT entry expiration  timers on performance, the effect of load balancing of Interests over multiple available routes to content, the impact of local repairs in Interest forwarding,  and the performance implications of the interaction between
SIFAH  and a routing protocol that guarantees loop-free routing tables (and hence FIBs) at all times \cite{dcr, icnp14,  dcr-mcast} compared to one that does not \cite{nlsr}.




\begin{thebibliography}{99}
 \vspace{0.01in}

\bibitem{ndnSIM}
A. Afanasyev, I. Moiseenko, and L. Zhang, ``ndnSIM: NDN Simulator for ns-3'', {\em University of California, Los Angeles, Tech. Rep}, 2012.


\bibitem{icn-survey1}
B. Ahlgren et al., ``A Survey of Information-centric Networking,"  
{\em IEEE Commun. Magazine}, July 2012, pp. 26--36.


\bibitem{alva}
J. Behrens and J.J. Garcia-Luna-Aceves, ``Hierarchical Routing Using Link Vectors,"
{\em Proc. IEEE INFOCOM `98}, April 1998.

\bibitem{ccnx}
Content Centric Networking Project (CCN) [online]. \\
http://www.ccnx.org/releases/latest/doc/technical/

\bibitem{caching}
A. Dabirmoghaddam et al., ``Understanding Optimal Caching and Opportunistic Caching at "The Edge'' of Information-Centric Networks,"
{\em Proc. ACM ICN `14},  Sept.  2014.

\bibitem{diffuse}
E.W. Dijkstra and C.S. Scholten ``Termination Detection for Diffusing Computations," {\em Information Processing Letters}, Vol. 11, No. 1, 1980.

\bibitem{dijkstra}
E.W. Dijkstra, W. Feijen, and A.J.M. van Gasteren, ``Derivation of a Termination Detection Algorithm for Distributed Computations," 
{\em Information Processing Letters},  Vol. 16, No. 5, 1983.


\bibitem{dual}
J.J. Garcia-Luna-Aceves,
``A Unified Approach to Loop-Free Routing Using Distance Vectors or Link States,"
{\em Proc. ACM SIGCOMM `89}, Aug. 1989.

\bibitem{dcr}
J.J. Garcia-Luna-Aceves, ``Name-Based Content Routing in Information Centric Networks Using Distance Information,"
{\em Proc. ACM ICN  `14},  Sept.  2014.

\bibitem{icnp14}
J.J. Garcia-Luna-Aceves, ``Routing to Multi-Instantiated Destinations: Principles and Applications,"
{\em IEEE ICNP `14}, Oct. 2014.

\bibitem{dcr-mcast}
J.J. Garcia-Luna-Aceves, ``Efficient Multi-Source Multicasting in Information Centric Networks,"
{\em Proc. IEEE CCNC  `15},  Jan.  2015.

\bibitem{diffusion}
C. Intanagonwiwat, R. Govindan, and D. Estrin, ``Directed Diffusion: A Scalable and Robust Communication Paradigm for Sensor Networks,"
{\em Proc. ACM MobiCom `00}, 2000.

\bibitem{ccn}
V. Jacobson et al., ``Networking Named Content," {\em Proc. IEEE CoNEXT `09}, Dec. 2009.	


\bibitem{nlsr}
A.K.M. Mahmudul-Hoque et al., ``NSLR: Named-Data Link State Routing Protocol," {\em Proc. ACM ICN `13}, 2013.


\bibitem{dtd}
J. Matocha and T. Camp, ``A Taxonomy of Distributed Termination Detection Algorithms,"
{\em Journal of Systems and Software}, 1998. 

\bibitem{ndn}
NDN Project [online]. http://www.named-data.net/

\bibitem{icnp98}
M. Spohn and J.J. Garcia-Luna-Aceves, ``Scalable Link-State Internet Routing,"
{\em Proc. IEEE ICNP `98}, Oct. 1998.

\bibitem{direct}
I. Solis and J.J. Garcia-Luna-Aceves, ``Robust Content Dissemination in Disrupted Environments,'' 
{\em Proc. ACM CHANTS `08}, Sept. 2008.

\bibitem{vutukury}
S. Vutukury and J.J. Garcia-Luna-Aceves, ``A Simple Approximation to Minimum-Delay Routing," {\em Proc. ACM SIGCOMM `99}, Aug. 1999.

\bibitem{icn-survey2}
G. Xylomenos et al., ``A Survey of Information-centric Networking Research,"  {\em IEEE Communication Surveys and Tutorials}, July 2013.


\bibitem{ndn-fw}
C. Yi et al., ``Adaptive Forwarding in Named Data Networking,"
{\em ACM CCR}, Vol. 42, No. 3, July 2012.

\bibitem{ndn-fw2}
C. Yi et al., ``A Case for Stateful Forwarding Plane,"
{\em Computer Communications}, pp. 779-791, 2013.


\bibitem{dual2}
W.T. Zaumen and J.J. Garcia-Luna-Aceves, ``Dynamics of Distributed Shortest-Path Routing Algorithms," 
{\em Proc. ACM SIGCOMM `91}, Sept. 1991.

\bibitem{dual-patent}
W.T. Zaumen and J.J. Garcia-Luna-Aceves, 
``System for Maintaining Multiple Loop-free Paths between Source Node and Destination Node in Computer Network,"
US Patent 5,881,243, 1999.

\bibitem{ndn-paper}
L. Zhang et al., ``Named Data Networking," {\em ACM SIGCOMM Computer Communication Review}, Vol. 44, No. 3, July 2014.

\end{thebibliography}
\end{document}